\newcolumntype{L}[1]{>{\raggedright\let\newline\\arraybackslash\hspace{0pt}}m{#1}}
\newcolumntype{C}[1]{>{\centering\let\newline\\arraybackslash\hspace{0pt}}m{#1}}
\newcolumntype{R}[1]{>{\raggedleft\let\newline\\arraybackslash\hspace{0pt}}m{#1}}
\newtheorem{theorem}{Theorem}
\newtheorem{corollary}{Corollary}
\newtheorem{proposition}{Proposition}
\newtheorem{remark}{Remark}
\newenvironment{manualtheorem}[1]{%
  \manualtheoreminner
}{\endmanualtheoreminner}
\newtheorem{definition}{Definition}
\DeclareMathOperator*{\argmax}{arg\,max}
\DeclareMathOperator*{\argmin}{arg\,min}
\title{Auctioning Multiple Goods without Priors}
\author{Wanchang Zhang\thanks{Department of Economics, University of California, San Diego.  Email: waz024@ucsd.edu. I am indebted to    Songzi Du and Joel Sobel for stimulating discussions. I thank  Denis Shishkin and Emanuel Vespa for helpful comments.}
}
\date{Current Draft: April 2022\\
First Draft: October 2021}
\begin{document}

\maketitle
\begin{abstract}
 I consider a mechanism design problem of selling multiple goods to multiple bidders when the  designer has minimal amount of information. I assume that the   designer only knows the upper bounds of bidders' values for each good  and has no additional distributional information. The  designer takes a minimax regret approach. The expected regret from a mechanism given a joint distribution over value profiles and an equilibrium is defined as the difference between the full surplus and the expected revenue.    The designer seeks a mechanism, referred to as a \textit{minimax regret mechanism},  that minimizes her worst-case expected regret across all possible joint distributions over value profiles and all  equilibria. I find that a \textit{separate second-price auction with random reserves} is a minimax regret mechanism for general upper bounds. Under this mechanism, the  designer holds a separate auction for each good; the formats of these auctions are second-price auctions with random reserves.
\vspace{0in}\\
\noindent\textbf{Keywords:} Minimax regret, robust mechanism design,  multiple goods, separate second-price auctions, random reserves.\\
\noindent\textbf{JEL Codes:} C72, D44, D82.
\end{abstract}
\newpage
\section{Introduction}
 The  standard auction literature  focuses on the single-good environment and  assumes that bidders' value profile follows a commonly known joint distribution. It is assumed that the designer  seeks a mechanism that maximizes the expected revenue. \cite{myerson1981optimal} characterizes optimal mechanisms for selling a single good when bidders' values are independent; \cite{cremer1988full} characterize optimal mechanisms for selling a single good given generic correlation structures of bidders' values. However,  optimal mechanisms vary widely with the model of correlation structure and relatively little is known about how optimal mechanisms would perform once the correlation structure is misspecified. In addition, it is not clear how the designer should form  a prior in the first place. \\ 
 \indent In this paper, I am going to extend the analysis in two ways. First, I consider the  multiple-good environment, in which little is known about the optimal mechanisms. Even for the special case where there is only one bidder, the optimal mechanism is hard to characterize or to describe (\cite{daskalakis2014complexity} and \cite{manelli2007multidimensional}).  Second, I consider a robust version of the analysis.  Specifically,  I consider a (correlated) private value model where the designer auctioning multiple different goods   knows no distributional information except for the \textit{upper bounds} of bidders' values for each good. In contrast, bidders agree on a joint distribution over their value profiles\footnote{In the Appendix, I show that the main result still holds when  bidders can acquire additional information.}.  The designer  considers any joint distribution consistent with the known upper bounds to be possible. I consider general mechanisms with the
only requirement that the mechanism ``secures'' bidders' participation: there exists a message for each payoff type of each bidder that guarantees a non-negative ex-post payoff regardless of the other bidders' messages.  I assume that the designer takes the \textit{minimax regret approach}. Precisely, the \textit{expected regret} from a mechanism given a joint distribution over value profiles and a Bayesian equilibrium is defined as the difference between the full surplus\footnote{The full surplus under a joint distribution over value profiles is the expected revenue attainable were the designer able to sell the goods with full information about bidders' value profile.} and the expected revenue. The designer evaluates a mechanism by its highest expected regret across all possible joint distributions and all Bayesian equilibria, which is referred to as its \textit{regret cap}. The designer aims to find a mechanism, referred to as a \textit{minimax regret mechanism},  that minimizes the regret cap.\\
\indent The assumption that the designer only knows the upper bounds  is appropriate for situations where little information is known about the bidders and it is costly and time-consuming to collect  information. For example, in an auction of initial public offerings, there is no distributional information about the bidders' values. Note that  in this example,   bidders' budgets, which can be viewed as (reasonable approximations of) the upper bounds of bidders' values,    are typically known by the designer, as  bidders for initial public offerings are often institutional investors, whose financial
resources are publicly known, or can be estimated fairly precisely from their financial reports. On the contrary, the assumption may be too conservative for situations in which data about bidders are abundant, e.g., online advertising in which auctions are held repeatedly and frequently.  In addition,  the assumption is  formally necessary to obtain non-trivial results because if there is no known upper bound, then minimax regret will be infinite \footnote{See Remark \ref{r5} for the formal proof.}. Thus, this model can be viewed as a theoretical benchmark that provides a first step toward a broad study of robust auction design problems in the multiple-good environment.\\
\indent The minimax regret approach can  be traced back to \cite{wald1950statistical} and \cite{savage1951theory}. It captures the idea that a decision maker is concerned about missing out opportunities. A decision theoretical axiomatization of regret  can be found in \cite{milnor1951games} and \cite{stoye2011axioms}.   It is adapted to multiple priors by \cite{hayashi2008regret} and \cite{stoye2011axioms}. Another leading approach is the maxmin utility approach, which is adopted by most of the robust mechanism design literature.  However, in the setting of this paper,  under the maxmin utility approach, it is optimal for the designer to keep all goods to herself because  it is possible that all bidders have zero values towards all goods. Note that in this extreme case, there is no surplus to  extract even under complete information.  Thus, the maxmin utility approach is too conservative to be useful, whereas the minimax regret approach  protects the surplus when there is some surplus to extract and will be shown to lead to a non-trivial answer. \\
\indent The main result is that a \textit{separate second-price auction with random reserves} is a minimax regret mechanism. This mechanism can be described as follows. For each good, the designer holds a separate auction; the formats of  these auctions are second-price auctions with bidder-specific random reserves that depend on the upper bounds of  values\footnote{The distributions of these random reserves are given in the Section \ref{s51}. }. It is remarkable that a simple mechanism arises as a robustly optimal mechanism for auctioning multiple goods, across   all participation-securing mechanisms that include highly complicated mechanisms, e.g., combinatorial auctions\footnote{In a combinatorial auction, bidders can place bids on combinations of discrete heterogeneous goods.}.  \\
\indent Importantly, I allow for general upper bounds of values for the main result. In particular, the upper bounds of the values for a given good can be different across bidders. This captures the widely observed \textit{asymmetries} in many real-life auction environments. For example, in art auctions,  there are obvious asymmetries associated with differing budget constraints across bidders. Asymmetric auctions have been studied in \cite{maskin2000asymmetric}, \cite{hafalir2008asymmetric},\cite{guth2005bidding} and \cite{athey2013set} among others. \\
\indent The main result provides a possible explanation why  separate second-price auctions - or their more practical equivalents in the private value environment, separate English auctions - are widely used  in practice for auctioning multiple goods. For example, at the popular online auction site eBay, each good is typically auctioned separately via an English auction (\cite{krishna2009auction}, \cite{anwar2006bidding} and  \cite{feldman2020simultaneous}); Sotheby’s
and Christie’s (two major auction houses of art) typically sell works of art separately via an English auction (\cite{ashenfelter2003auctions}).\footnote{In practice, there are other mechanisms used for auctioning multiple goods. While most of spectrum auctions in US do not allow for  ``package bidding'' due to its complexities (\cite{cramton2002spectrum} and \cite{filiz2015multi}),  the Federal Communications Commission (FCC) has used a ``package bidding auction'' to sell spectrum licenses in rare cases. In such an auction, a bidder is allowed to select a group of licenses to bid on as a package. In an early 2008 FCC auction (Auction 73), AT\&T and Verizon both bought geographically diverse packages of 700MHz spectrum. The principal rationale to consider a package bidding auction is that there may be complementarity between different licenses (\cite{goeree2005comparing}).    Complementarity is ruled out in my model. It is an interesting open question what the minimax regret mechanism would be like when there is complementarity between goods.}   The main  result  justifies this empirical rule of thumb by an optimal performance guarantee: a separate second-price auction (albeit with random reserves) minimizes the worst-case expected regret. This may be one reason why  complicated mechanisms that require the full information of the joint distribution over bidders' value profiles, to my knowledge,  are not used in practice for auctioning multiple goods. \\
\indent The role of randomized reserves  can be   seen considering the one-good one-bidder case, in which they are reduced to randomized pricing. The designer suffers from a large regret if she charges a high price when the value of the bidder is low or if she charges a low price when the value of the bidder is high.  She can lower her regret by randomizing. \cite{bergemann2008pricing}  characterize the solution for the one-good one-bidder case. Indeed, the well-crafted distribution of the randomized pricing renders the designer indifferent across  values over a range.  The second-price auction with random reserves extends the robust property to the one-good  multiple-bidder case. In this case, the regret from a value profile is the difference between the highest value and the collected revenue. When the second highest value is low enough (e.g., 0),  it boils down to the one-good one-bidder case and the regret remains the same; when the second highest value is high enough (e.g., above the lower bound of the random reserve for the highest bidder), then the revenue is even higher and the regret is thus lower. To see the intuition behind separation, it is instructive to consider another mechanism of  auctioning the  \textit{grand} bundle (only the bundle of all goods is auctioned). I argue that this mechanism may result in a high regret. Consider a three-good three-bidder example and an extremely \textit{asymmetric} value profile in which each bidder values a different good (assuming that the upper bound on each bidder's values for each good is 1) : $(v_1^1,v_1^2,v_1^3)= (1,0,0), (v_2^1,v_2^2,v_2^3)= (0,1,0), (v_3^1,v_3^2,v_3^3)=(0,0,1)$.\footnote{The superscript represents the good, and the subscript represents the bidder.} The designer will \textit{lose all but one good} if auctioning the grand bundle: she can at most obtain a revenue of 1 from one of the goods but will suffer a regret of 2 from losing the other goods. In contrast, separation can guarantee a good regret performance for each good. It can be shown that for this example,  the separate second-price auction with random reserves  yields a regret cap\footnote{Indeed, it can be shown that the regret cap is $\frac{3}{e}$ by simple calculation.} lower than 2.  Intuitively, auctioning the grand bundle performs just like selling one good at this value profile, while selling separately allows the designer to earn more. Furthermore, the same argument implies that \textit{partial} bundling (a mechanism in which a bundle of some goods are auctioned) may perform worse than separate selling.   \\
\indent I show that the separate second-price auction with random reserves is a minimax regret mechanism by constructing a  joint distribution over value profiles, referred to as a  \textit{worst-case  distribution}, such that the lower bound of the expected regret for any mechanism and any equilibrium under this joint distribution is \textit{equal} to  the upper bound of the expected regret for any joint distribution and any equilibrium under the separate second-price auction with random reserves. One can imagine that adversarial nature  is constructing a worst-case distribution to let the designer suffer from a high expected regret.\\
\indent The worst-case  distribution admits a simple description as follows\footnote{Its formal definition  is given in Section \ref{s52}. }. For each good, adversarial nature \textit{selects} one bidder whose upper bound of the values of the good is the highest among the bidders (breaking ties arbitrarily).  For each bidder, the marginal distributions of the values of the goods for which the bidder is selected are \textit{equal-revenue distributions}, defined by the property of   a unit-elastic demand: in the monopoly pricing problem, the monopoly's revenue from charging any price in the support  is the same; the values for these goods are \textit{comonotonic} (maximal positive correlation); the values for the other goods are all  \textit{zeros}\footnote{Note that if a bidder is not selected for any good, then his values for all goods are zeros.}. For the goods across the  bidders, the values are \textit{independent}.\\
\indent Under this distribution,  each selected bidder 
values  a totally different set of goods; for each good,   exactly one bidder values it; each selected bidder's values for the goods he values are comonotonic; the values of the goods across the selected bidders are independent.\\
\indent Now I illustrate the idea behind this  distribution. First,  to understand the part of equal-revenue distributions, consider the  one-good one-bidder case in which the mechanism collapses to randomized pricing over a range. As the designer is indifferent between these prices, the marginal revenue must be zero over these prices, from which  an equal-revenue distribution arises\footnote{See  \cite{bergemann2008pricing} for details about the derivation.}. Second, the intuition for the part of  selection can be summarized by a \textit{scale effect}: because the minimax regret in the one-good one-bidder case is proportional to the upper bound of the values, by selecting a bidder whose upper bound of the values is the highest for each good,  the potential regret is made  the highest for each good. Third, the intuition for the part of comonotonicity  can be summarized by a \textit{screening effect}: consider the  multiple-good one-bidder case,  the comonotonicity between goods limits the ability of the designer to screen different goods by reducing the multi-dimensional screening to the single-dimensional screening.   Fourth, the intuition for the part of  zeros  can be summarized by a \textit{competition effect}: it eliminates the competition among bidders for each good by letting  only one bidder have a positive value for each good. Fifth, the intuition for the part of independence can be summarized by an \textit{information effect}: one bidder's  values for goods  do not provide any information about any other bidder's  values for  other goods, which prevents the designer from extracting 
surplus from  one bidder based on information about  other bidders.  \\
\indent The main  result incorporates multi-dimensional screening and single-good auction as two special cases.  For the multi-dimensional screening, a  \textit{separate randomized posted-price mechanism} is a minimax regret mechanism, and  a  distribution in which the  values across the goods are comonotonic is a worst-case  distribution.\footnote{The solution for the multi-dimensional screening has been found by \cite{koccyiugit2021robust}. I offer an alternative proof using a quantile-version of virtual values. \cite{carroll2017robustness} also uses quantiles to parameterize the single buyer's values for the multi-dimensional screening.} For the single-good auction, a \textit{second-price auction with random reserves} is a minimax regret mechanism, and a distribution in which only one bidder has a positive value for the good is a worst-case  distribution. \\
\indent The remainder of the introduction discusses related work. Section \ref{s3} presents the model.  Section \ref{s4} illustrates the methodology. Section \ref{s5} characterizes the main result.   Section \ref{s6} presents the solutions to two special cases.  Section \ref{s7} is a discussion.  Section \ref{s8} is a conclusion. The Appendix extends the result to an environment where bidders can acquire any additional information.
\subsection{Related Work}\label{s2}
The closest related work is \cite{koccyiugit2020regret}, who consider the same environment and find a (different) separate second-price auction with random reserves has good robust properties. There are, however, several critical differences. First, they restrict attention to dominant-strategy  mechanisms\footnote{See Definition \ref{d1} for the formal definition of dominant-strategy mechanisms.}, whereas I allow for general mechanisms with  essentially the only requirement that there is a message that secures  bidders' participation. That is, I search for a minimax regret mechanism from  a much wider class of mechanisms.   Second,  they show that their proposed mechanism is a minimax regret mechanism for the symmetric case where  the upper bounds of the values for a given good are the same across bidders, whereas I establish that my proposed mechanism is a minimax regret mechanism for general upper bounds. That is, I place no restrictions on the upper bounds of the  values for a given good across bidders. It is  important and interesting to understand the minimax regret mechanism in asymmetric environments considered in this paper, as the symmetric case is a knife-edge case. In this sense, this paper complements their work.  The key factor that drives these differences is that I construct a different  joint distribution over value profiles that yields a higher lower bound of the expected regret in general. In addition, there is a Pareto ranking between my proposed mechanism and theirs:  in the truth-telling equilibrium, the ex-post regret is always weakly lower and sometimes strictly lower under my proposed mechanism than that under theirs (Remark \ref{r2}).  Besides, technically,  they take the duality approach for their result, whereas I adopt an adaptation of the classic Myerson's approach to identify a lower bound of the expected regret under my constructed joint distribution.\\
\indent \cite{bergemann2008pricing,bergemann2011robust}  consider the problem of monopoly pricing where the monopolist is faced with uncertainty about the demand curve and characterize randomized posted price mechanisms as minimax regret mechanisms. My result is reduced to that of  \cite{bergemann2008pricing} in the  one-good one-bidder case. \cite{koccyiugit2021robust} consider the problem of multi-dimensional screening without priors and characterize randomized separate posted price mechanisms as minimax regret mechanisms. My result is reduced to theirs in the multiple-good one-bidder case. Moreover, I offer another minimax regret mechanism: a randomized grand bundling (Remark \ref{r7}). \\
\indent There are other  mechanism design papers where the designer aims to minimize the worst-case regret. \cite{guo2019robust} study the classic problem of monopoly regulation (\cite{baron1982regulating}) using the minimax regret approach; \cite{guo2021project} study a project choice problem (\cite{armstrong2010model}) using the minimax regret approach; \cite{caldentey2017intertemporal} characterize the deterministic dynamic pricing rule that minimizes the seller's worst-case regret; \cite{renou2011implementation} study the problem of implementing social
choice correspondences using $\epsilon-$minimax regret criterion.\\
\indent More broadly, this paper is related to the robust mechanism design literature and the information design literature. \cite{carrasco2018optimal} characterize maxmin selling mechanisms  when the seller faced with a single buyer only knows the first $N$ moments of distribution ($N$ is an arbitrary positive integer). \cite{che2019distributionally}, \cite{he2022correlation} and \cite{zhang2021correlation} study the robust auction design problem when the designer has limited distributional information.  \cite{zhang2021robust} studies the profit-maximizing bilateral trade problem and characterize maxmin trade mechanisms when the designer knows only the expectations of the values.  Similar to mine, these papers all assume that  the values are private and all characterize some randomized mechanism as a maxmin mechanism. \cite{carroll2017robustness} studies the multi-dimensional screening problem when the designer only knows the marginal distributions. Similar to the multiple-good one-bidder case in my paper, a separate selling mechanism turns out to be a maxmin solution.  Different from the multiple-good one-bidder case in my paper, his maxmin solution does not require randomization. \cite{chung2007foundations} and \cite{chen2018revisiting} study  maxmin foundations for dominant-strategy mechanisms. Similar to my model, they study the private value environment. In contrast to my model, beliefs are not required to be consistent with a common prior; in addition, they select for the designer's most preferred equilibrium. \cite{roesler2017buyer} and \cite{condorelli2020information} derive optimal information structures for maximizing buyer's surplus. My worst-case distribution is reduced to theirs in the  one-good  one-bidder case. \cite{du2018robust} derives the optimal informationally robust mechanism for the  one-good  one-bidder case and constructs a mechanism that asymptotically extracts full surplus for the single-good auction.  \cite{brooks2021optimal} derive the optimal  informationally robust mechanism for the single-good auction. In contrast to my model, they study the common value environment. However, our solution concepts are similar. My solution is indeed a \textit{strong minimax solution}:  holding the joint distribution 
fixed, the mechanism and equilibrium minimize  regret, and holding the mechanism fixed,
the joint distribution and equilibrium maximize regret; in addition, there is an equilibrium (the truth-telling equilibrium) under which the regret cap is hit.

\section{Model}\label{s3}
I consider a (correlated) private value environment where a designer sells $J$ different indivisible  goods to $I$  risk-neutral bidders.  I denote by $\mathcal{I} = \{1,2,\cdots, I\}$ the set of bidders and by $\mathcal{J} = \{1,2,\cdots, J\}$ the set of goods. Bidder $i$'s value  of the good $j$  is denoted by $v_i^j$, and bidder $i$'s value vector for all goods is denoted by $\mathbf{v_i}=(v_i^1,v_i^2,\cdots,v_i^J)$. The value profile across bidders is denoted by $\mathbf{v}=\mathbf{(v_1,v_2,\cdots,v_I)}$. Each bidder's value vector is his private information, which the designer perceives as uncertain.  I assume that the designer only knows an upper bound $\bar{v_i}^j$ on the value $v_i^j$ for all $i\in\mathcal{I}$ and all $j\in \mathcal{J}$. Then I denote by $V_i=\times_{j\in\mathcal{J}}[0,\bar{v_i}^j]$ the set of all possible value vectors of bidder $i$, by $V=\times_{i\in \mathcal{I}}V_i$ the set of all possible value profiles across bidders and by $\Delta V$ the set of all possible joint distributions on $V$. In contrast, bidders share a common prior $\pi\in \Delta(V)$.  For exposition, I assume that the supply cost for each good is zero\footnote{All results can be easily extended to the case where the supply cost can be positive and different for each good. Formal statement and proofs are omitted but available upon request.}.  \\
\indent A \textit{mechanism} $\mathcal{M}$ consists of measurable sets of messages $M_i$ for each $i$ and measurable allocation rules $\mathbf{q_i}=(q_i^j)_{j\in\mathcal{J}}:M\to [0,1]^J$ and measurable payment rules: $t_i:M\to \mathbb{R}$ for each $i$, where $M=\times_{i=1}^IM_i$ is the set of message profiles, such that $\sum_{i=1}^Iq_i^j(\mathbf{m})\le 1$ for each $j$. Given a mechanism $\mathcal{M}$ and a simultaneously submitted message profile $\mathbf{m}$, bidder $i$ with a value vector of $\mathbf{v_i}$ has an ex-post payoff \[
U_i(\mathbf{v_i,m})=\mathbf{v_i}\cdot\mathbf{q_i(m)}-t_i(\mathbf{m}).\tag{1}\label{1}\]
Bidders' preferences are quasilinear and additively separable across the goods. I require the mechanism to satisfy a   \textit{participation security} constraint: For each $i$, there exists $\mathbf{0}\in M_i$ such that for each $\mathbf{v_i}\in V_i$ and each $\mathbf{m_{-i}}\in M_{-i}$, \[U_i(\mathbf{v_i,(0,m_{-i})})\ge 0.\tag{PS}\label{ps}\] Bidder $i$ with a value vector $\mathbf{v_i}$ can guarantee a nonnegative ex-post payoff by sending this message, regardless of messages sent by the other bidders.\\
\indent Given a mechanism $\mathcal{M}$ and a joint distribution (common prior among bidders) $\pi$, I have a game of incomplete information. A \textit{Bayes Nash Equilibrium} (BNE) of the game is a strategy profile $\sigma=(\sigma_i)$, $\sigma_i:V_i\to \Delta(M_i)$, such that $\sigma_i$ is best response to $\sigma_{-i}$: Let $U_i(\mathbf{v_i},\mathcal{M},\mathbf{\pi}, \sigma)=\int_{\mathbf{v_{-i}}}U_i(\mathbf{v_i}, (\sigma_i(\mathbf{v_i}),\sigma_{-i}(\mathbf{v_{-i}})))d\mathbf{\pi(v_{-i}|v_i)}$ where $U_i(\mathbf{v_i}, (\sigma_i(\mathbf{v_i}),\sigma_{-i}(\mathbf{v_{-i}})))$ is the multilinear extension of $U_i$ in Equation \eqref{1}, then for any $i,\mathbf{v_i},\sigma_i'$,
    \[U_i(\mathbf{v_i}, \mathcal{M},\pi, \sigma)\ge U_i(\mathbf{v_i}, \mathcal{M},\pi, (\sigma_i',\sigma_{-i})).\tag{BR}\label{br} \]
The set of  all Bayes Nash Equilibria for a given mechanism $\mathcal{M}$ and a given joint distribution $\mathbf{\pi}$ is denoted by $\Sigma(\mathcal{M},\mathbf{\pi})$. \\
\indent The designer's \textit{expected regret} is defined as the difference between the full surplus  given a joint distribution  and the expected revenue under a mechanism and an equilibrium. Given a joint distribution $\mathbf{\pi}$, the full surplus  is $\int_{\mathbf{v}}\{ \sum_{j=1}^J\max_{i\in\mathcal{I}}v_i^j\}d\mathbf{\pi(v)}$, and the expected revenue given a mechanism $\mathcal{M}$ and an equilibrium $\sigma$ is $\int_{\mathbf{v}}\{\sum_{i=1}^I t_i(\sigma(\mathbf{v}))\}d\mathbf{\pi(v)}$. The expected regret thus is $ER(\mathcal{M},\mathbf{\pi}, \sigma)=\int_{\mathbf{v}}\{ \sum_{j=1}^J\max_{i\in\mathcal{I}}v_i^j-\sum_{i=1}^I t_i(\sigma(\mathbf{v}))\}d\mathbf{\pi(v)}$. The integrand is defined as  the \textit{ex-post regret} from $\mathbf{v}$ under the equilibrium $\sigma$. The designer evaluates a mechanism by its worst-case expected regret across all possible joint distributions and equilibria. Formally, the designer evaluates a mechanism $\mathcal{M}$ by $GER(\mathcal{M})=\sup_{\mathbf{\pi}\in \Delta(V)}\sup_{\sigma\in\Sigma(\mathcal{M},\mathbf{\pi})}ER(\mathcal{M},\mathbf{\pi}, \sigma)$,  referred to as the \textit{regret cap}. I say $\bar{R}$ is an \textit{upper bound} of the expected regret under a mechanism $\mathcal{M}$ if $GER(\mathcal{M})\le \bar{R}$. I say $\underline{R}$ is a \textit{lower bound} of the expected regret given a joint distribution $\mathbf{\pi}$ if $\inf_{\mathcal{M}}\inf_{\sigma\in\Sigma(\mathcal{M},\mathbf{\pi})}ER(\mathcal{M},\mathbf{\pi},\sigma)\ge \underline{R}$. The designer's goal is to find a mechanism with the minimal regret cap. I refer to the minimal regret cap as the \textit{minimax regret}. Formally, the designer aims to find a mechanism $\mathcal{M^*}$, referred to as a \textit{minimax regret mechanism}, that solves the following problem:\[
\inf_\mathcal{M}GER(\mathcal{M}).\tag{MRM}\label{mrm}\]  
\section{Methodology}\label{s4}
The problem (\ref{mrm}) can be interpreted as a two-player sequential game. The two players are the designer and adversarial nature. The designer first chooses
a mechanism $\mathcal{M}$. After observing the designer’s choice of
the mechanism, adversarial nature chooses a joint distribution over value profiles  $\mathbf{\pi}\in \Delta(V)$ as well as an equilibrium $\sigma\in \Sigma(\mathcal{M},\mathbf{\pi})$  to maximize the expected regret. The designer's payoff is $-ER(\mathcal{M},\mathbf{\pi}, \sigma)$, and nature's payoff is $ER(\mathcal{M},\mathbf{\pi}, \sigma)$ if $\Sigma(\mathcal{M},\mathbf{\pi})\neq \emptyset$; otherwise, both players' payoffs are minus infinity.   One can also consider a game in which nature moves first by choosing a joint distribution and the designer moves next by choosing a mechanism and an equilibrium. Although it is not obvious that these two problems are payoff equivalent because the equilibrium correspondence is not lower-hemicontinuous, I will construct a mechanism $\mathcal{M^*}$ and a joint distribution $\mathbf{\pi^*}$ that form (a version of) a saddle point and therefore (a version of) the minimax theorem holds.  Precisely, I will (i) find
  an upper bound  $R^*$   of the expected regret under the mechanism $\mathcal{M^*}$ and (ii) show that  $R^*$ is also a lower bound of the expected regret given the joint distribution $\mathbf{\pi^*}$.    Note that (ii) implies that no mechanism can achieve a regret cap strictly lower than $R^*$, and (i) says that the regret cap of  $\mathcal{M^*}$ is weakly lower than $R^*$. Therefore, (i) and (ii) together imply that $\mathcal{M^*}$ is a minimax regret mechanism.  I refer to $\mathbf{\pi^*}$ as a \textit{worst-case  distribution}. 

\section{Main Result}\label{s5}
In this section, I first formally define the separate second-price auction with random reserves $(M^*, \mathbf{q^{*},t^{*}})$ (Section \ref{s51}) and the  joint distribution over value profiles $\mathbf{\pi^*}$ (Section \ref{s52}), then I present the formal statement of the result (Section \ref{s53}) that the proposed mechanism (resp, the proposed distribution) is a minimax regret mechanism (resp, a worst-case  distribution). Finally,  I prove the formal statement (Section \ref{s54}).
\subsection{Separate Second-price Auction with Random Reserves}\label{s51}
The separate second-price auction with random reserves,  $(M^*, \mathbf{q^{*},t^{*}})$,  is defined as follows. First, it is a direct mechanism, i.e., $M^{*}=V$. With slight abuse of notations, I use $\mathbf{v}=(v^j_i)_{i\in\mathcal{I},j\in\mathcal{J}}$ to also denote the reported message. Let $v_{(2)}^j$ be the second highest reported value for the good $j$ whenever well-defined. \\ \indent If there are no ties, $\mathbf{q^{*}(v)}=(q_i^{j*}(\mathbf{v}))_{i\in\mathcal{I},j\in\mathcal{J}}$ where \[q_i^{j*}(\mathbf{v})=\left\{
\begin{array}{lll}
1+\ln{\frac{v_i^j}{\bar{v_i}^j}}    &      & {\text{if $\forall k\neq i, v_i^j>v^j_k$ and $\frac{\bar{v_i}^j}{e}\le v_i^j\le \bar{v_i}^j$};}\\
0    &      & {\text{if $\exists k\neq i$ s.t. $v_i^j<v_k^j$ or $0\le v_i^j<\frac{\bar{v_i}^j}{e}$}.}
\end{array} \right.\]
$\mathbf{t^{*}(v)}=(t_i^{*}(\mathbf{v}))_{i\in\mathcal{I}}$ in which $t_i^{*}(\mathbf{v})=\sum_{j\in\mathcal{J}}t_i^{j*}(\mathbf{v})$\footnote{$t_i^{j*}$ can be interpreted as the payment from bidder $i$ for good $j$ under the mechanism $(M^*, \mathbf{q^{*},t^{*}})$. } where
\[t_i^{j*}(\mathbf{v})=\left\{
\begin{array}{lll}
v_i^j-\frac{\bar{v_i}^j}{e}    &      & {\text{if $\forall k\neq i$, $v_i^j>v^j_k$ and $v^j_{(2)}< \frac{\bar{v_i}^j}{e}\le v_i^j\le \bar{v_i}^j$};}\\
v_i^j+v^j_{(2)}\ln{\frac{v^j_{(2)}}{\bar{v_i}^j}}    &      & {\text{if $\forall k\neq i$, $v_i^j>v^j_k$ and $\frac{\bar{v_i}^j}{e}\le v^j_{(2)}< v_i^j\le \bar{v_i}^j$};}\\
0    &      & {\text{if $\exists k\neq i$ s.t. $v_i^j<v_k^j$ or $0\le v_i^j<\frac{\bar{v_i}^j}{e}$}.}
\end{array} \right.\]
\indent  Now I specify the tie breaking rule when there are ties. Given a value profile $\mathbf{v}$ in which there are ties in the auction of good $j$, let $\mathcal{I}(\mathbf{v^j}):=\{s\in\mathcal{I}|v_s^j\ge v_i^j\quad \forall i\in\mathcal{I}\quad\text{and}\quad v_s\ge \frac{\bar{v_s}^j}{e}\}$. If $\mathcal{I}(\mathbf{v^j})$ is empty, then $q_i^{j*}(\mathbf{v})=t_i^{j*}(\mathbf{v})=0$ for any $i\in\mathcal{I}$; otherwise, pick a bidder $i\in\argmin_{s\in \mathcal{I}(\mathbf{v^j})}\bar{v_s}^j$,  let $q_i^{j*}(\mathbf{v})=1+\ln\frac{v_i^j}{\bar{v_i}^j}, t_i^{j*}(\mathbf{v})=v_i^j+v_i^j\ln{\frac{v_i^j}{\bar{v_i}^j}}$, and  $q_s^{j*}(\mathbf{v})=t_s^{j*}(\mathbf{v})=0$ for any $s\neq i$. In words, among the bidders whose values are weakly higher than their lower bounds of the random reserves respectively, pick a bidder whose upper bound of the values is the lowest and allocate  good $j$ to this bidder when his bid is higher than the random reserve.\\
\indent Note that the allocation probabilities for each good $j$ are independent of the bidders' values for the other goods. The payment rule is characterized by the envelope theorem. Then clearly the above mechanism is equivalent to holding a separate second-price auction with bidder-specific random reserves for  each good.

\subsection{Joint Distribution}\label{s52}
The  joint distribution over value profiles $\mathbf{\pi^*}$ is defined via the following five steps.\\
\textit{\textbf{Step 1: Selection}}. For each good $j$, pick (breaking ties arbitrarily) a bidder $i\in \argmax_{s\in\mathcal{I}}\bar{v_s}^j$. Let $\mathcal{J}(i)$ denote the set of goods for which $i$ is picked. Note that $\mathcal{J}(i)$  (could be empty) is disjoint and  $\cup_{i\in\mathcal{I}}\mathcal{J}(i)=\mathcal{J}$. If a bidder is not picked for any good, then his values for all goods are zeros.\\
\textit{\textbf{Step 2: Equal-revenue Distributions}}. For each bidder $i$ and $j\in\mathcal{J}(i)$ (this part is irrelevant if $\mathcal{J}(i)$ is empty),  the marginal distribution of $v_i^j$ is an equal-revenue distribution whose cumulative distribution function is 
\[\pi_i^{j*}(v_i^j)=\left\{
\begin{array}{lll}
1-\frac{\bar{v_i}^j}{ev_i^j}    &      & {\text{if $\frac{\bar{v_i}^j}{e}\le v_i^j< \bar{v_i}^j$};}\\
1    &      & {\text{if $v_i^j=\bar{v_i}^j$}.}
\end{array} \right.\]
\textit{\textbf{Step 3: Comonotonicity}}. 
For each bidder $i$ and cross $j\in\mathcal{J}(i)$ (this part is irrelevant if $\mathcal{J}(i)$ is empty), the dependence structure is comonotonic. Formally, for good $j\in\mathcal{J}(i)$, define the inverse quantile function \[v_i^j(z_i)=\min\{\tilde{v_i}^j|\pi_i^{j*}(\tilde{v_i}^j)\ge z_i\}=\left\{
\begin{array}{lll}
\frac{\bar{v_i}^j}{e(1-z_i)}    &      & {\text{if $0\le z_i<1-\frac{1}{e}$};}\\
\bar{v_i}^j    &      & {\text{if $z_i\ge 1-\frac{1}{e}$}.}
\end{array} \right.\] Then I define the joint distribution across $j\in\mathcal{J}(i)$  by randomly drawing $z_i\sim U[0,1]$ and taking $v_i^j=v_i^j(z_i)$ for each $j\in\mathcal{J}(i)$.\\
\textit{\textbf{Step 4: Zeros}}. For each bidder $i$ and $j\notin\mathcal{J}(i)$ (this part is irrelevant if $\mathcal{J}(i)=\mathcal{J}$), $v_i^j=0$.\\
\textit{\textbf{Step 5: Independence}}.  For goods across bidders (this part is irrelevant if $\mathcal{J}(i)=\mathcal{J}$ for some $i\in\mathcal{I}$), the values are independently distributed. Formally, $z_i$'s are independently distributed uniform distributions. 
\subsection{Formal Statement: Theorem \ref{t3}}\label{s53}
\begin{theorem}\label{t3}
The mechanism  $(M^*, \mathbf{q^{*},t^{*}})$ is a minimax regret mechanism with the regret cap of $\sum_{j\in\mathcal{J}}\max_{i\in\mathcal{I}}\frac{\bar{v_i}^j}{e}$. The joint distribution $\mathbf{\pi^*}$ is a worst-case  distribution.
\end{theorem}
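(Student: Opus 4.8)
The plan follows the saddle-point template of Section \ref{s4}. Set $R^*:=\sum_{j\in\mathcal{J}}\max_{i\in\mathcal{I}}\bar v_i^j/e$. I would prove separately that (i) $GER(M^*,\mathbf{q^{*},t^{*}})\le R^*$, and (ii) $ER(\mathcal{M},\mathbf{\pi^*},\sigma)\ge R^*$ for every participation-securing mechanism $\mathcal{M}$ and every $\sigma\in\Sigma(\mathcal{M},\mathbf{\pi^*})$; by the reasoning in Section \ref{s4}, (i) and (ii) together yield both halves of the statement (that $(M^*,\mathbf{q^{*},t^{*}})$ is a minimax regret mechanism and that $\mathbf{\pi^*}$ is a worst-case distribution).

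\textbf{Part (ii): the lower bound under $\mathbf{\pi^*}$.} First compute the full surplus. Since $\mathbf{\pi^*}$ partitions the goods into the sets $\mathcal{J}(i)$ and sets every other value to $0$, the full surplus equals $\sum_j\mathbb{E}_{\mathbf{\pi^*}}[v_{i(j)}^j]$, and each relevant marginal is the equal-revenue law on $[\bar v_i^j/e,\bar v_i^j]$, whose mean is $\bar v_i^j/e+\frac{\bar v_i^j}{e}\int_{\bar v_i^j/e}^{\bar v_i^j}\frac{dt}{t}=2\bar v_i^j/e$; because $i(j)$ is a bidder with the largest upper bound for $j$, the full surplus is $2R^*$. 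It therefore suffices to show that no participation-securing mechanism extracts expected revenue above $R^*$ under $\mathbf{\pi^*}$, which then gives $ER\ge 2R^*-R^*=R^*$. To bound revenue I would run Myerson's argument in quantile form. Under $\mathbf{\pi^*}$ each bidder $i$'s value vector is a deterministic nondecreasing function $\mathbf{v_i}(z_i)$ of a single uniform index $z_i$, and the $z_i$ are independent across bidders; hence in any BNE a type $z_i$ that mimics type $z_i'$ faces the same distribution of opponents' messages as $z_i'$, so the interim allocation probabilities $Q_i^j(z_i)$ and interim payment $T_i(z_i)$ are incentive compatible in this one-dimensional index. Each coordinate $v_i^j(\cdot)$ is Lipschitz (its derivative is bounded by $e\bar v_i^j$), so the integral form of the envelope theorem applies; with $u_i(0)\ge 0$ from \eqref{ps} this gives $u_i(z_i)\ge\int_0^{z_i}\sum_j v_i^{j\prime}(s)Q_i^j(s)\,ds$. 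Writing revenue as allocated value minus bidder surplus and integrating against the uniform density yields $\text{Rev}\le\sum_{i,j}\int_0^1\psi_i^j(s)Q_i^j(s)\,ds$ with quantile virtual value $\psi_i^j(s):=v_i^j(s)-(1-s)v_i^{j\prime}(s)$. The point of the equal-revenue law is that $\psi_i^j(s)=0$ on $[0,1-1/e)$ and $\psi_i^j(s)=\bar v_i^j$ on $[1-1/e,1]$ for $j\in\mathcal{J}(i)$, while $\psi_i^j\equiv 0$ for $j\notin\mathcal{J}(i)$; hence $\text{Rev}\le\sum_i\sum_{j\in\mathcal{J}(i)}\bar v_i^j\int_{1-1/e}^1 Q_i^j(s)\,ds\le\sum_i\sum_{j\in\mathcal{J}(i)}\bar v_i^j/e=R^*$, using $Q_i^j\le 1$ and $\bar v_{i(j)}^j=\max_i\bar v_i^j$.

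\textbf{Part (i): the upper bound under $M^*$.} Because the allocation and payment attached to good $j$ depend only on the reported values for good $j$, and payoffs are additively separable across goods, under any prior and any BNE the expected regret splits as $\sum_j ER_j$ with $ER_j=\mathbb{E}[\max_i v_i^j]-\text{Rev}_j$ depending only on the marginal joint behavior of the good-$j$ reports. It thus suffices to show $ER_j\le\max_i\bar v_i^j/e$ for each $j$, i.e., the bound holds for a single-good second-price auction with the bidder-specific random reserves of Section \ref{s51}. In the truth-telling equilibrium this is a direct computation: going through the three branches of $t_i^{j*}$, the ex-post regret from a profile equals $\bar v_{i^*}^j/e$ when the highest value $v_{i^*}^j$ is at least $\bar v_{i^*}^j/e$ and the runner-up lies below that threshold, is strictly smaller otherwise, and is always at most $\bar v_{i^*}^j/e\le\max_i\bar v_i^j/e$ (the flat-marginal-revenue phenomenon of \cite{bergemann2008pricing}, together with the fact that a high runner-up only raises revenue). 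For a general BNE, the key is that bidding one's true value is weakly dominant for each good, so each bidder's equilibrium good-$j$ payoff equals his payoff from deviating to a truthful good-$j$ report; this identity gives $\text{Rev}_j=\mathbb{E}[v_{w(j)}^j\,q_{w(j)}^{j*}]-\sum_i u_i^j$, with $w(j)$ the highest reporter and $u_i^j$ the truthful-deviation payoff, so that $ER_j=\mathbb{E}\big[\max_i v_i^j-v_{w(j)}^j q_{w(j)}^{j*}\big]+\sum_i u_i^j$, and one bounds this combined loss by $\max_i\bar v_i^j/e$ using the best-response conditions together with the exact form of the winner's expected payment under the random reserve (in particular that $x\ln x\ge-1/e$ on $[0,1]$, which controls the second branch of $t_i^{j*}$).

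\textbf{Main obstacle.} The hard part is part (i) for equilibria other than truth-telling. Revenue is not equalized across the equilibria of a second-price auction — truth-telling is in fact favorable to the designer rather than adversarial — and ex-post regret can exceed $\max_i\bar v_i^j/e$ at off-path bid profiles, so one genuinely has to argue that in every BNE the high-value bidders bid aggressively enough that $\text{Rev}_j\ge\mathbb{E}[\max_i v_i^j]-\max_i\bar v_i^j/e$; the weak-dominance payoff identity reduces this to bounding the combined loss from inefficient assignment and bidder rents, which is then closed by a case analysis of the random-reserve payment rule. Part (ii), by contrast, is a clean one-dimensional Myerson computation once the comonotone-plus-independent structure of $\mathbf{\pi^*}$ has collapsed each bidder's screening problem to a single quantile dimension.
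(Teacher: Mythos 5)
Your overall architecture is exactly the paper's: exhibit the saddle point by showing the regret cap of $(M^*, \mathbf{q^{*},t^{*}})$ is at most $R^*=\sum_{j\in\mathcal{J}}\max_{i\in\mathcal{I}}\bar{v_i}^j/e$ and that no mechanism--equilibrium pair can achieve expected regret below $R^*$ under $\mathbf{\pi^*}$. Your Part (ii) is essentially the paper's Proposition \ref{p2} verbatim: the full-surplus computation ($2R^*$), the reduction to a one-dimensional quantile index per bidder via comonotonicity plus independence, the Lipschitz/envelope step, and the quantile virtual value that vanishes on $[0,1-1/e)$ and equals $\bar{v_i}^j$ above, giving revenue at most $R^*$. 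Your Part (i) for the truth-telling equilibrium also matches the paper's Proposition \ref{p1}: per-good separability plus the three-branch case check on $t_i^{j*}$.

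The one place your proposal does not close is the step you yourself flag: bounding $ER_j$ in an \emph{arbitrary} BNE. Your payoff-identity route ($\text{Rev}_j=\mathbb{E}[v_{w(j)}^j q_{w(j)}^{j*}]-\sum_i u_i^j$ with $u_i^j$ pinned down by weak dominance of truthful reporting) is a legitimate starting point, but the assertion that the resulting "combined loss" of misallocation plus rents is at most $\max_i\bar{v_i}^j/e$ is exactly the content that needs proving, and your sketch does not supply it; the worry you raise (a high-value bidder underbidding so that $w(j)$ is not the efficient winner) is real and is not resolved by the inequality $x\ln x\ge -1/e$ alone. The paper closes this step not through an aggregate revenue identity but through an ex-post case analysis on the true highest-value bidder $i$ for good $j$: either (a) $v_i^j\le\bar{v_i}^j/e$, so the surplus at stake is already below the target; or (b) $v_i^j>\bar{v_i}^j/e$ and, under $\sigma_{-i}$, truthful reporting wins with positive probability, in which case truthful reporting is a strict best response for $i$ and the truth-telling computation applies to that profile; or (c) truthful reporting wins with probability zero, which forces the others' highest report to be (almost surely) weakly above $v_i^j$, so the realized revenue is within $\max_i\bar{v_i}^j/e$ of that report and hence of $\max_i v_i^j$. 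If you want to complete your write-up, replacing your aggregate identity with this pointwise trichotomy is the shortest route; alternatively you would need to carry out the case analysis of the payment rule that your last sentence only gestures at.
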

\subsection{Proof of Theorem \ref{t3}}\label{s54}
\subsubsection{Upper Bound on Regret for $(M^*, \mathbf{q^{*},t^{*}})$}
\begin{proposition}\label{p1}
An upper bound of the expected regret under the mechanism $(M^*, \mathbf{q^{*},t^{*}})$ is $\sum_{j\in\mathcal{J}}\max_{i\in\mathcal{I}}\frac{\bar{v_i}^j}{e}$.
\end{proposition}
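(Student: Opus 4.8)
The plan is to prove this by showing that for \emph{any} joint distribution $\pi$ and \emph{any} Bayes Nash equilibrium $\sigma$ of the game induced by $(M^*,\mathbf{q^*,t^*})$, the expected regret is at most $\sum_{j\in\mathcal{J}}\max_{i\in\mathcal{I}}\frac{\bar v_i^j}{e}$. Because the mechanism is a direct mechanism that decomposes good-by-good (the allocation and payment for good $j$ depend only on the reported values for good $j$), the first step is to reduce the multi-good bound to a sum of single-good bounds: the ex-post regret from a value profile $\mathbf{v}$ is $\sum_{j\in\mathcal{J}}\big(\max_{i}v_i^j-\sum_i t_i^{j*}(\sigma(\mathbf{v}))\big)$, so it suffices to bound the expected per-good regret $\mathbb{E}_\pi[\max_i v_i^j-\sum_i t_i^{j*}(\sigma(\mathbf{v}))]$ by $\max_i\frac{\bar v_i^j}{e}$ for each fixed $j$. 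The subtlety here is that although the \emph{outcome} for good $j$ depends only on reports $\mathbf{v^j}$, the equilibrium strategies $\sigma_i$ are functions of the whole vector $\mathbf{v_i}$ and the reported $v^j$-coordinates need not be truthful; so I cannot simply invoke truth-telling. I will instead argue directly about whatever (possibly randomized, possibly non-truthful) distribution of reported $j$-profiles the equilibrium induces.

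Fix good $j$. The key structural fact to extract from the mechanism's definition is that, conditional on any realized report profile (or on the bidder identity $i$ who wins good $j$ and the winning report $r$), the revenue collected on good $j$ is at least $r-\frac{\bar v_i^j}{e}$ when $r\ge\frac{\bar v_i^j}{e}$, and is $0$ otherwise; and crucially, the winner $i$ of good $j$ has a dominant-strategy-type incentive built into the format (second-price with reserve), so that in any BNE his reported $v^j$-coordinate, when it matters, cannot profitably deviate. The cleanest route is: (a) show that in any BNE, each bidder $i$ reports his good-$j$ coordinate as a best response in the induced single-good second-price-with-random-reserve auction, hence (by the standard second-price logic, since truthful bidding on coordinate $j$ is weakly dominant within that coordinate given quasilinearity and separability) we may as well assume the good-$j$ reports are truthful, or at least that the winner's reported value equals his true value $v_i^j$ whenever he wins; (b) conclude that the winner of good $j$ is the true highest-value bidder among those clearing their reserve, and the payment is the second-price-with-reserve payment evaluated at true values. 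Then the per-good regret is exactly the regret of a single-good second-price auction with random reserve drawn with $\Pr[\text{reserve}\le r]=1+\ln(r/\bar v_i^j)$ on $[\bar v_i^j/e,\bar v_i^j]$, facing the true value distribution of good $j$.

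Given that reduction, I would finish by the monopoly-pricing/Bergemann--Schlag-style computation: for the single-good problem with value $v\le\bar v$ and random reserve with the stated distribution, the expected revenue conditional on the winning bidder's value $v$ is $\int (\text{payment})\,d(\text{reserve distribution})$, which a direct integration shows equals $v-\frac{\bar v}{e}$ for every $v\in[\bar v/e,\bar v]$ (this is exactly the ``equal-regret'' property that the random reserve is engineered to produce), and equals $0$ for $v<\bar v/e$, in which case the regret is just $v\le\bar v/e$. Hence the expected per-good regret is $\mathbb{E}[\,v_{(1)}^j-\text{rev}\,]\le\frac{\bar v^j}{e}$ where $\bar v^j=\bar v_i^j$ for the winner $i$; and since competition from other bidders can only lower regret (a higher second-highest value raises revenue), and since the worst winner is the one with the largest upper bound, the per-good regret is at most $\max_i\frac{\bar v_i^j}{e}$. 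Summing over $j$ gives the claimed bound $\sum_{j\in\mathcal{J}}\max_{i\in\mathcal{I}}\frac{\bar v_i^j}{e}$, and the handling of ties is immediate from the tie-breaking rule (it only ever picks a bidder with a \emph{smaller} upper bound, which weakly helps). The main obstacle I anticipate is step (a)--(b): carefully justifying that in an arbitrary BNE of the \emph{full} mechanism — not a dominant-strategy implementation — the good-$j$ reports can be taken to be truthful (or at worst that non-truthful reporting on coordinate $j$ cannot increase the designer's regret), exploiting additive separability of payoffs across goods together with the weak-dominance of truthful bidding in a second-price auction, so that the per-coordinate incentive analysis is unaffected by the other coordinates.
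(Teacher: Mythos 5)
Your overall architecture is right and matches the paper's: decompose good-by-good using separability, use the equal-regret property of the random reserve (revenue $r-\bar v_i^j/e$ when the winning report $r$ clears the reserve), note that a higher second-highest report only raises revenue, and check that the tie-breaking rule picks the smaller upper bound. But there is a genuine gap exactly where you flag one: step (a)--(b), the reduction to truthful good-$j$ reports in an arbitrary BNE, does not go through as stated. Weak dominance of truthful bidding in a second-price auction with reserve does \emph{not} imply that every Bayes Nash equilibrium is truthful, nor even that ``the winner's reported value equals his true value whenever he wins''; second-price formats are notorious for admitting non-truthful equilibria (e.g., a bidder who never wins is indifferent over a continuum of reports, and his report can distort who wins and what is paid). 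Since the proposition quantifies over \emph{all} equilibria, you cannot ``as well assume'' truthfulness, and your fallback claim is asserted rather than proved.

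The paper closes this gap without ever reducing to truthful play. It first establishes the pointwise bound as a function of \emph{reports}: for any report profile, the difference between the highest report for good $j$ and the revenue collected on good $j$ is at most $\max_i\bar v_i^j/e$. Then, for an arbitrary BNE and the true highest-value bidder $i$ for good $j$, it splits into cases: (i) if $v_i^j>\bar v_i^j/e$ and truthfully reporting $v_i^j$ would win against a positive measure of the others' equilibrium reports, then bidder $i$'s best response is \emph{strictly} to report truthfully on coordinate $j$ (the expected payoff $v_i^j(1+\ln(r/\bar v_i^j))-(r-\bar v_i^j/e)$ is strictly increasing in $r$ up to $v_i^j$ whenever winning has positive probability), so the truthful-case bound applies to him; (ii) if $v_i^j\le\bar v_i^j/e$, the regret is trivially at most $\bar v_i^j/e$; (iii) otherwise the others' highest report exceeds $v_i^j$ almost surely, so the report-based bound transfers to the value-based regret because the highest report dominates the highest value. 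If you replace your (a)--(b) with this case analysis, your proof becomes complete and coincides with the paper's.
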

\begin{proof}
Consider the auction of good $j$ under the mechanism $(M^*, \mathbf{q^{*},t^{*}})$. Fix an arbitrary joint distribution $\mathbf{\pi}\in \Delta V$.  First consider the truth-telling equilibrium.  Given any value profile $\mathbf{v}\in supp(\mathbf{\pi})$,  suppose $i$ is the unique highest bidder for  good $j$ . If $v^j_{(2)}< \frac{\bar{v_i}^j}{e}\le v_i^j\le \bar{v_i}^j$, then the ex-post regret is $v_i^1-t_i^{j*}(\mathbf{v})=v_i^j-(v_i^j-\frac{\bar{v_i}^j}{e})=\frac{\bar{v_i}^j}{e}$; if $ \frac{\bar{v_i}^j}{e}\le v^j_{(2)}< v_i^j\le \bar{v_i}^j$, then the ex-post regret is $v_i^j-t_i^{j*}(\mathbf{v})=v_i^j-(v_i^j+v^j_{(2)}\ln{\frac{v^j_{(2)}}{\bar{v_i}^j}})=-v^j_{(2)}\ln{\frac{v^j_{(2)}}{\bar{v_i}^j}}$, which is maximized at $v^j_{(2)}=\frac{\bar{v_i}^j}{e}$, yielding an ex-post regret of $\frac{\bar{v_i}^j}{e}$; finally, if $0\le v_i^j<\frac{\bar{v_i}^j}{e}$, then  the ex-post regret is less than $\frac{\bar{v_i}^j}{e}$ because the maximal revenue is less than $\frac{\bar{v_i}^j}{e}$.  Suppose now there are ties. Then under the specified tie breaking rule, if $\mathcal{I}(\mathbf{v^j})$ is empty, then $v^j_i<\max_{i\in\mathcal{I}}\frac{\bar{v_i}^j}{e}$ for any $i\in\mathcal{I}$, and so the ex-post regret is less than $\max_{i\in\mathcal{I}}\frac{\bar{v_i}^j}{e}$; if $\mathcal{I}(\mathbf{v^j})$ is not empty, then the ex-post regret is $v_i^j\ln{\frac{v_i^j}{\bar{v_i}^j}}$ where $i\in\argmin_{s\in \mathcal{I}(\mathbf{v^j})}\bar{v_s}^j$, which is maximized at $v_i^j=\frac{\bar{v_i}^j}{e}$, yielding an ex-post regret of $\frac{\bar{v_i}^j}{e}\le \max_{i\in\mathcal{I}}\frac{\bar{v_i}^j}{e}$. Thus, in the truth-telling equilibrium, the ex-post regret from good $j$ is at most $\max_{i\in\mathcal{I}}\frac{\bar{v_i}^j}{e}$ for any value profile. \\
\indent Next consider  any equilibrium $\sigma$.  Given any value profile $\mathbf{v}\in supp(\mathbf{\pi})$, pick a bidder $i$ whose value for good $j$ is the highest among the bidders, or $i\in \argmax_{i\in I}v_i^j$.  If $v_i^j>\frac{\bar{v_i}^j}{e}$ and there is a positive measure of the others' reports under the conditional equilibrium   report distribution $\sigma_{-i}(\mathbf{v_{-i}})$ such that bidder $i$ wins the good by truthfully reporting $v_i^j$ provided that $v_i^j$ is higher than the random reserve, then bidder $i$ has a strict incentive to truthfully report his  value for good $j$, and thus the  argument in the previous paragraph implies that  the ex-post regret must not exceed  $ \max_{i\in\mathcal{I}}\frac{\bar{v_i}^j}{e}$. Otherwise, there are two cases to consider. 1) If $v_i^j\le\frac{\bar{v_i}^j}{e}$, then the most to lose does  not exceed $\frac{\bar{v_i}^j}{e}$ and thus the ex-post regret must not exceed $\max_{i\in\mathcal{I}}\frac{\bar{v_i}^j}{e}$. 2) If $v_i^j>\frac{\bar{v_i}^j}{e}$ and there is a zero measure of the others' reports under the   conditional equilibrium  report distribution $\sigma_{-i}(\mathbf{v_{-i}})$ such that bidder $i$ wins the good by truthfully reporting $v_i^j$ provided that $v_i^j$ is higher than the random reserve, then (almost surely) the highest report among the other bidders is (weakly) higher than bidder $i$'s value for good $j$. In this case,   the ex-post regret must not exceed $\max_{i\in\mathcal{I}}\frac{\bar{v_i}^j}{e}$ as  the difference between the highest report and the ex-post revenue  is weakly less than $\max_{i\in\mathcal{I}}\frac{\bar{v_i}^j}{e}$ by the  argument in the previous paragraph. Thus, in any equilibrium,   the ex-post regret from good $j$ is at most $\max_{i\in\mathcal{I}}\frac{\bar{v_i}^j}{e}$ for any value profile. This implies that in any equilibrium, the expected regret from good $j$ is at most $\max_{i\in\mathcal{I}}\frac{\bar{v_i}^j}{e}$ given the arbitrary joint distribution $\mathbf{\pi}$. \\
\indent Finally, because of  the separable nature of the mechanism $(M^*, \mathbf{q^{*},t^{*}})$,  an upper bound of the expected regret is $\sum_{j\in\mathcal{J}}\max_{i\in\mathcal{I}}\frac{\bar{v_i}^j}{e}$. 
\end{proof}
\begin{remark}
\normalfont This upper bound is hit given the joint distribution $\mathbf{\pi^*}$ and the truth-telling equilibrium. To see this, fix any $\mathbf{v}\in \mathbf{\pi^*}$ and  consider good $j$. By the definition of  $\mathbf{\pi^*}$, there is only one bidder, denoted by $i$,  whose value for good $j$ is positive. In addition, $v_i^j\ge \frac{\bar{v_i}^j}{e}$. Then by the proof of Proposition \ref{p1}, the ex-post regret from good $j$ is $\frac{\bar{v_i}^j}{e}$ in the truth-telling equilibrium. Note that by the definition of  $\mathbf{\pi^*}$, $i\in \argmax_{i\in \mathcal{I}}\bar{v_i}^j$. Thus the ex-post regret from good $j$ is equal to  $\max_{i\in\mathcal{I}}\frac{\bar{v_i}^j}{e}$ in the truth-telling equilibrium. Because this is true for any $\mathbf{v}\in \mathbf{\pi^*}$, the expected regret from good $j$ is equal to  $\max_{i\in\mathcal{I}}\frac{\bar{v_i}^j}{e}$ given the joint distribution $\mathbf{\pi^*}$ and the truth-telling equilibrium. Summing up across goods, the expected regret is  $\sum_{j\in\mathcal{J}}\max_{i\in\mathcal{I}}\frac{\bar{v_i}^j}{e}$ given the joint distribution $\mathbf{\pi^*}$ and the truth-telling equilibrium. 
\end{remark}
\begin{remark}\label{r2}
\normalfont \cite{koccyiugit2020regret} present a separate second-price auction with anonymous random reserves   in which \[q_i^{j}(\mathbf{v})=\left\{
\begin{array}{lll}
1+\ln{\frac{v_i^j}{\max_{i\in\mathcal{I}}\bar{v_i}^j}}    &      & {\text{if $\forall k\neq i$, $v_i^j>v^j_k$ and $\frac{\max_{i\in\mathcal{I}}\bar{v_i}^j}{e}\le v_i^j\le \bar{v_i}^j$};}\\
0    &      & {\text{if $ \exists k\neq i$ s.t. $v_i^j<v_k^j$ or $0\le v_i^j<\frac{\max_{i\in\mathcal{I}}\bar{v_i}^j}{e}$}.}
\end{array} \right.\] And
the payment rule is characterized by the envelope theorem. They show that given this mechanism,  $\sum_{j\in\mathcal{J}}\max_{i\in\mathcal{I}}\frac{\bar{v_i}^j}{e}$ is an upper bound of the ex-post regret.  Although their mechanism has the same regret cap, the designer may favor $(M^*, \mathbf{q^{*},t^{*}})$  over their mechanism for reasons outside the model. Specifically,  there exists a Pareto ranking between the two mechanisms in the following sense: as long as $\max_{i\in\mathcal{I}}\bar{v_i}^j>\bar{v_k}^j$ for some $k\in\mathcal{I}$ and some $j\in\mathcal{J}$,     it is straightforward to show that in the truth-telling equilibrium,  i) the ex-post regret under $(M^*, \mathbf{q^{*},t^{*}})$  is weakly lower than that  under their mechanism for any $\mathbf{v}\in V$, and ii) the ex-post regret under $(M^*, \mathbf{q^{*},t^{*}})$  is strictly lower than that under their mechanism for some $\mathbf{v}\in V$. Intuitively, under their mechanism,  the allocation probability is lower for a highest bidder whose upper bound of the values of the good is not the highest, resulting in a higher ex-post regret for such a value profile.
\end{remark}
\indent Indeed, under the criterion used in Remark \ref{r2}, no   mechanism from the family of separate second-price auctions\footnote{In a separate second-price auction, there may be random reserves, deterministic reserves, or no reserves.}, denoted by $\mathcal{F}-SSP$,   is better than the mechanism $(M^*, \mathbf{q^{*},t^{*}})$.
\begin{definition}\label{d1}
\normalfont I say a direct mechanism $\mathcal{M}=(V,\mathbf{q,t})$ is a \textit{dominant-strategy mechanism} if for all $i\in \mathcal{I}$, all $\mathbf{v}\in V$, and all $\mathbf{v_i'}\in V_i$,
\[\mathbf{v_i}\cdot \mathbf{q_i(v)}-t_i(\mathbf{v})\ge \mathbf{v_i}\cdot \mathbf{q_i(v_i',v_{-i})}-t_i(\mathbf{v_i',v_{-i}}), \]\[\mathbf{v_i}\cdot \mathbf{q_i(v)}-t_i(\mathbf{v})\ge 0.\]
\end{definition}
\begin{definition}
\normalfont I say a dominant-strategy mechanism $\mathcal{M}_1$ is \textit{undominated} by another dominant-strategy mechanism $\mathcal{M}_2$ if in the truth-telling equilibrium, the ex-post regret under the mechanism $\mathcal{M}_1$ is strictly lower than that under the mechanism $\mathcal{M}_2$ for some $\mathbf{v}\in V$.
\end{definition}
\begin{corollary}
The mechanism $(M^*, \mathbf{q^{*},t^{*}})$ is undominated by any mechanism from $\mathcal{F}-SSP$.
\end{corollary}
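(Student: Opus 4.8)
The plan is to argue by contradiction. Suppose some $\mathcal{M}\in\mathcal{F}-SSP$ dominates $(M^*, \mathbf{q^{*},t^{*}})$ in the sense used in Remark~\ref{r2}: in the truth-telling equilibrium the ex-post regret under $\mathcal{M}$ is weakly lower than that under $(M^*, \mathbf{q^{*},t^{*}})$ at every $\mathbf{v}\in V$ and strictly lower at some $\hat{\mathbf{v}}$. I will show that these requirements force $\mathcal{M}$ to have exactly the ex-post regret of $(M^*, \mathbf{q^{*},t^{*}})$ at every value profile, contradicting strictness at $\hat{\mathbf{v}}$. The first step is \emph{separation}: because both mechanisms run an independent auction for each good, ex-post regret is additive across goods and its good-$j$ term depends only on the reports $(v_i^j)_{i\in\mathcal{I}}$. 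Feeding in profiles with $v_i^{j'}=0$ for all $i$ and all $j'\neq j$ (on which both mechanisms collect nothing and incur no regret), domination reduces, good by good, to the statement that the single-good regret produced by $\mathcal{M}$'s good-$j$ auction is everywhere at most that of the good-$j$ component of $(M^*, \mathbf{q^{*},t^{*}})$; and the strict improvement at $\hat{\mathbf{v}}$ must occur on some good $j_0$.

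The second step \emph{pins down the reserves}. Fix a good $j$ and a bidder $i$ and consider the profiles in which only $i$ reports a positive value $v\in[0,\bar v_i^j]$. In any separate second-price auction, bidder $i$ then wins good $j$ exactly when his random reserve $R_i^j$ does not exceed $v$ and pays $R_i^j$, so the good-$j$ regret is $v-\mathbb{E}[R_i^j\mathbf{1}\{R_i^j\le v\}]$. Under $(M^*, \mathbf{q^{*},t^{*}})$ this regret equals $\min\{v,\bar v_i^j/e\}$ (the computation in the proof of Proposition~\ref{p1}), so domination forces $\mathbb{E}[R_i^j\mathbf{1}\{R_i^j\le v\}]\ge\max\{0,\,v-\bar v_i^j/e\}$ for every $v\in[0,\bar v_i^j]$. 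By the single-good case of Theorem~\ref{t3}, equivalently \cite{bergemann2008pricing}, the unique reserve law satisfying this bound is precisely the one used by $(M^*, \mathbf{q^{*},t^{*}})$, with c.d.f.\ $v\mapsto 1+\ln(v/\bar v_i^j)$ on $[\bar v_i^j/e,\bar v_i^j]$. (Should uniqueness need an in-line argument: equality at $v=\bar v_i^j$ forces $\mathbb{E}[R_i^j]=\bar v_i^j(1-1/e)$, and the identity $\int_0^{\bar v_i^j}\mathbb{E}[R_i^j\mathbf{1}\{R_i^j\le v\}]\,v^{-2}\,dv=1-\mathbb{E}[R_i^j]/\bar v_i^j$ obtained by Fubini then shows the everywhere-nonnegative difference between $v\mapsto\mathbb{E}[R_i^j\mathbf{1}\{R_i^j\le v\}]$ and $v\mapsto\max\{0,\,v-\bar v_i^j/e\}$ integrates to zero, hence vanishes.)

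The third step \emph{pins down the payments}. Having matched the reserves on good $j_0$, note that any second-price auction sells good $j_0$ to the unique highest bidder $i$ exactly when $R_i^{j_0}\le v_i^{j_0}$ and, when it sells, charges him no more than the larger of his reserve draw $R_i^{j_0}$ and the second-highest bid $v_{(2)}^{j_0}$ — the most a second-price auction can levy — whereas $(M^*, \mathbf{q^{*},t^{*}})$ charges exactly $\mathbb{E}[\max\{R_i^{j_0},v_{(2)}^{j_0}\}\mathbf{1}\{R_i^{j_0}\le v_i^{j_0}\}]$ (again by the verification behind Proposition~\ref{p1}). Hence $\mathcal{M}$'s expected revenue, and so its ex-post regret, is at least that of $(M^*, \mathbf{q^{*},t^{*}})$ at every profile with a unique highest bidder; combined with the reverse inequality from domination, and handling tied profiles through the tie-breaking rule in the analogous way, $\mathcal{M}$ and $(M^*, \mathbf{q^{*},t^{*}})$ have identical ex-post regret at every value profile, contradicting the strict inequality at $\hat{\mathbf{v}}^{j_0}$.

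I expect the third step to be the main obstacle: one has to verify rigorously that no member of $\mathcal{F}-SSP$ can extract from the winner more than the larger of the winner's reserve draw and $v_{(2)}^{j}$, that $(M^*, \mathbf{q^{*},t^{*}})$ attains this bound at every value profile (including ties), and, in the second step, that the Bergemann--Schlag randomized-pricing distribution is genuinely the unique minimax one.
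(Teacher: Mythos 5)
Your proof is correct in substance and shares the paper's central idea --- restrict attention to profiles where only bidder $i$ has a positive value for good $j$, observe that there the mechanism reduces to randomized monopoly pricing, and invoke uniqueness of the Bergemann--Schlag reserve distribution --- but you package it differently and more completely. The paper argues directly: any different reserve yields, by the \emph{uniqueness of the minimax} randomized price (Proposition 1 of \cite{bergemann2008pricing}), some profile at which the alternative is strictly worse, and a short separate argument handles tie-breaking. You instead argue by contradiction against Pareto domination, which requires the stronger \emph{pointwise} uniqueness statement that no reserve law can weakly improve on $\min\{v,\bar v_i^j/e\}$ everywhere; your in-line Fubini identity ($\int_0^{\bar v}\mathbb{E}[R\mathbf{1}\{R\le v\}]v^{-2}dv\le 1-\mathbb{E}[R\mathbf{1}\{R\le\bar v\}]/\bar v$, with equality iff $R\in(0,\bar v]$ a.s.) combined with the constraint at $v=\bar v$ is a clean, self-contained way to get this, and your Step 3 makes explicit something the paper leaves implicit, namely that once reserves and tie-breaks match, the second-price payment rule is pinned down. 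What your route buys is a proof that does not lean on an external uniqueness citation; what it costs is that it only rules out mechanisms that are \emph{weakly better everywhere and strictly better somewhere} (the Remark \ref{r2} criterion), whereas Definition \ref{d1}'s ``undominated by $\mathcal{M}_2$'' literally asks for a profile at which $(M^*,\mathbf{q^*,t^*})$ is \emph{strictly} better --- the paper's argument delivers that stronger conclusion for every mechanism with a different reserve or tie-break.

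Two loose ends to fix. First, in Step 3 the direction of the inequality is garbled: since $\mathcal{M}$ charges \emph{no more than} $\max\{R_i^{j_0},v_{(2)}^{j_0}\}$ while $(M^*,\mathbf{q^*,t^*})$ charges exactly that, $\mathcal{M}$'s revenue is at \emph{most}, hence its regret at \emph{least}, that of $(M^*,\mathbf{q^*,t^*})$; as written (``revenue \dots is at least \dots and so its ex-post regret is at least'') the sentence is internally inconsistent, though the intended conclusion is the right one. Second, ``handling tied profiles \dots in the analogous way'' is doing real work: at a tie at value $v$ among bidders clearing their reserve floors, the regret under the winner-$i$ assignment is $-v\ln(v/\bar v_i^j)=v\ln(\bar v_i^j/v)$, which is increasing in $\bar v_i^j$, so only the rule that awards the good to a tied bidder with the smallest upper bound avoids being strictly worse at some tie profile. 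You should spell this out, as the paper does.
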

\begin{proof}
Fix any $i\in \mathcal{I}$ and any $j\in \mathcal{J}$, consider the value profiles in which $v_i^j\in [0,\bar{v_i}^j]$ and all other values are zeros. Then Proposition 1 in \cite{bergemann2008pricing} implies that the random reserve for  the bidder $i$ and the good $j$ in the mechanism $(M^*, \mathbf{q^{*},t^{*}})$  is the unique random reserve that minimizes the worst-case ex-post regret in the truth-telling equilibrium for these value profiles. Therefore, if a different (random) reserve were used for the bidder $i$ and the good $j$, then there would be a value profile with an ex-post regret strictly  higher than that under the mechanism $(M^*, \mathbf{q^{*},t^{*}})$. \\
\indent In addition, the specific tie-breaking rule  in $(M^*, \mathbf{q^{*},t^{*}})$ minimizes the worst-case ex-post regret when there are ties across different tie-breaking rules. To see this, recall  that  the worst-case ex-post regret is proportional to the upper bound of values in the  one-good one-bidder case and that  under the  tie-breaking rule in $(M^*,  \mathbf{q^{*},t^{*}})$, a bidder with the lowest upper bound of values for a good is picked. This finishes the proof.
\end{proof}
\subsubsection{Lower Bound on Regret for $\mathbf{\pi^*}$}
\begin{proposition}\label{p2}
A lower bound of the expected regret under $\mathbf{\pi^*}$ is $\sum_{j\in\mathcal{J}}\max_{i\in\mathcal{I}}\frac{\bar{v_i}^j}{e}$.
\end{proposition}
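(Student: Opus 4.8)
The plan is to show that under $\pi^*$ the expected revenue of \emph{any} participation-securing mechanism $\mathcal{M}$ in \emph{any} equilibrium $\sigma\in\Sigma(\mathcal{M},\pi^*)$ is at most half of the full surplus; since regret equals full surplus minus revenue, the stated lower bound follows at once.

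I would first record the full surplus. Under $\pi^*$ only the bidder $i(j)$ selected for good $j$ has a positive value for $j$, so $\max_k v_k^j=v_{i(j)}^j$ and the full surplus is $\sum_{j\in\mathcal{J}}\mathbb{E}_{\pi^*}[v_{i(j)}^j]$. A one-line integration of the equal-revenue marginal (its density plus the mass $1/e$ atom at $\bar{v_i}^j$) gives $\mathbb{E}_{\pi^*}[v_i^j]=\frac{2\bar{v_i}^j}{e}$, and since $\bar{v_{i(j)}}^j=\max_k\bar{v_k}^j$ the full surplus equals $2\sum_{j\in\mathcal{J}}\max_{i\in\mathcal{I}}\frac{\bar{v_i}^j}{e}$. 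So it is enough to bound the expected revenue by $\sum_{j\in\mathcal{J}}\max_{i\in\mathcal{I}}\frac{\bar{v_i}^j}{e}$.

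The engine is an adaptation of Myerson's argument exploiting that under $\pi^*$ every bidder's payoff type is effectively one-dimensional. Indeed, bidder $i$'s value vector lies on the comonotone curve $v_i^j=\bar{v_i}^j\,g(z_i)$ for $j\in\mathcal{J}(i)$ and $v_i^j=0$ otherwise, where $z_i\sim U[0,1]$ are independent and $g(z):=\min\{\tfrac{1}{e(1-z)},1\}$ --- a continuous, nondecreasing, $e$-Lipschitz function with a.e.\ derivative $g'(z)=\frac{1}{e(1-z)^2}$ on $[0,1-\tfrac1e)$ and $g'(z)=0$ on $(1-\tfrac1e,1]$. Fix $\mathcal{M}$ and $\sigma$. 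For a bidder $i$ with $\mathcal{J}(i)\neq\emptyset$ let $Q_i^j(z_i)$ be the interim probability $i$ receives good $j$ and $T_i(z_i)$ his interim expected total payment (expectations over $z_{-i}$, valid by independence, and over the mixing in $\sigma$); put $A_i(z_i):=\sum_{j\in\mathcal{J}(i)}\bar{v_i}^jQ_i^j(z_i)\in[0,\bar{w}_i]$ with $\bar{w}_i:=\sum_{j\in\mathcal{J}(i)}\bar{v_i}^j$, so that $i$'s interim equilibrium payoff is $u_i(z_i)=g(z_i)A_i(z_i)-T_i(z_i)$. Because type $z_i$ can imitate the (mixed) message of type $z_i'$ --- the other bidders cannot tell --- equilibrium forces $u_i(z_i)\ge g(z_i)A_i(z_i')-T_i(z_i')$ for all $z_i,z_i'$, hence $u_i(z_i)=\max_{z_i'}\big[g(z_i)A_i(z_i')-T_i(z_i')\big]$ (attained at $z_i'=z_i$); also participation security lets $z_i$ send $\mathbf{0}$, so $u_i(z_i)\ge0$. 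Thus $u_i$ is a maximum of functions affine in $g(z_i)$, hence Lipschitz in $z_i$ (using $A_i\le\bar{w}_i$ and $e$-Lipschitzness of $g$), hence absolutely continuous with $u_i'(z_i)=g'(z_i)A_i(z_i)$ a.e.\ by the envelope theorem. Integrating and swapping the order of integration ($\int_0^1\!\int_0^z g'(s)A_i(s)\,ds\,dz=\int_0^1(1-s)g'(s)A_i(s)\,ds$) gives
\[\mathbb{E}_{z_i}[T_i(z_i)]=\int_0^1\big[g(s)-(1-s)g'(s)\big]A_i(s)\,ds-u_i(0).\]
The bracket is the quantile virtual value, and the unit-elastic (equal-revenue) shape of $g$ makes it vanish on the elastic range: $g(s)-(1-s)g'(s)=\frac{1}{e(1-s)}-(1-s)\frac{1}{e(1-s)^2}=0$ for $s<1-\tfrac1e$, whereas it equals $1$ for $s\ge1-\tfrac1e$. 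Since $A_i\le\bar{w}_i$ and $u_i(0)\ge0$,
\[\mathbb{E}_{z_i}[T_i(z_i)]\le\int_{1-1/e}^{1}\bar{w}_i\,ds=\frac{\bar{w}_i}{e}=\sum_{j\in\mathcal{J}(i)}\frac{\bar{v_i}^j}{e}=\sum_{j\in\mathcal{J}(i)}\max_{k\in\mathcal{I}}\frac{\bar{v_k}^j}{e}.\]
For a bidder with $\mathcal{J}(i)=\emptyset$, participation security gives $-T_i(z_i)=u_i(z_i)\ge0$, so $\mathbb{E}_{z_i}[T_i(z_i)]\le0$. Summing over $i$ and using that $\{\mathcal{J}(i)\}_{i\in\mathcal{I}}$ partitions $\mathcal{J}$, the expected revenue $\sum_i\mathbb{E}_{z_i}[T_i(z_i)]$ is at most $\sum_{j\in\mathcal{J}}\max_{i\in\mathcal{I}}\frac{\bar{v_i}^j}{e}$, i.e.\ half the full surplus, and the claimed regret bound follows.

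I expect the main obstacle to be making this envelope step fully rigorous for an \emph{arbitrary} (possibly indirect) mechanism and an \emph{arbitrary}, possibly mixed, Bayes--Nash equilibrium: one must check that $Q_i^j$ and $T_i$ are well-defined measurable functions of $z_i$, that the imitation deviation is feasible and delivers exactly $(A_i(z_i'),T_i(z_i'))$, and that $u_i$ has the regularity needed for the revenue identity to hold with equality (Lipschitz continuity, which also absorbs the concave kink of $g$ at $z=1-\tfrac1e$). A secondary subtlety is that payments are not itemized across goods, so one cannot argue good-by-good that a non-selected bidder contributes zero revenue; the one-dimensional reduction above sidesteps this by bounding each bidder's \emph{total} payment, after which the sum over goods goes through because the $\mathcal{J}(i)$ partition $\mathcal{J}$.
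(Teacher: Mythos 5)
Your proof is correct and follows essentially the same route as the paper's: parameterize each bidder's type by the quantile $z_i$, apply the envelope theorem to the one-dimensional incentive constraints, and use that the quantile virtual value $g(s)-(1-s)g'(s)$ vanishes on the equal-revenue range to bound each selected bidder's expected payment by $\sum_{j\in\mathcal{J}(i)}\bar{v_i}^j/e$. The only cosmetic differences are that you aggregate the goods into the scalar $A_i(z_i)$ and argue by imitation in an arbitrary mechanism rather than first invoking the revelation principle, while the paper keeps the per-good sums and does an explicit integration by parts; the substance is identical.
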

\begin{proof}
Note that given a joint distribution, minimizing the expected regret across mechanisms and equilibria is equivalent to maximizing the expected revenue  across mechanisms and equilibria. Then the revelation principle applies and thus it is without loss to restrict attention to direct mechanisms.\\
\indent I parameterize the value profile across bidders 
by $\mathbf{z}=(z_1,z_2,\cdots,z_{I})\in [0,1]^I$.  Then for any direct mechanism $(\mathbf{q(z),t(z)})=((\mathbf{q_i(z)})_{i\in \mathcal{I}},(t_i(\mathbf{z}))_{i\in \mathcal{I}})$ where $\mathbf{q_i(z)}=(q_i^j(\mathbf{z}))_{j\in\mathcal{J}}\in [0,1]^J$ represent the allocation probabilities of the goods  to bidder $i$  under the parametrerized value profile $\mathbf{z}$ and $t_i(\mathbf{z})\in \mathbb{R}$ represents bidder $i$'s payment under $\mathbf{z}$,  (\ref{br}) together with (\ref{ps}) imply \[U_i(z_i):=\sum_{j\in \mathcal{J}(i)}v_i^j(z_i)Q_i^j(z_i)-T_i(z_i)\ge \sum_{j\in \mathcal{J}(i)}v_i^j(z_i)Q_i^j(z_i')-T_i(z_i') \quad \text{for}\quad i\in\mathcal{I},z_i,z_i'\in[0,1],\tag{BIC}\label{bic}\]
\[\sum_{j\in \mathcal{J}(i)}v_i^j(z_i)Q_i^j(z_i)-T_i(z_i)\ge 0 \quad \text{for}\quad i\in\mathcal{I},z_i\in [0,1],\tag{BIR}\label{bir}\]
where $Q_i^j(z_i)=\int_{[0,1]^{I-1}}q_i^j(z_i,\mathbf{z_{-i}})d\mathbf{z_{-i}}$ and $T_i(z_i)=\int_{[0,1]^{I-1}}t_i(z_i,\mathbf{z_{-i}})d\mathbf{z_{-i}}$ are the expected allocation of good $j$ to type $z_i$ of bidder $i$ and the expected payment made by type $z_i$ of bidder $i$ respectively, due to the fact that $z_i$'s are independently distributed uniform distributions by the definition of $\mathbf{\pi^*}$.
Note that  the allocation of good $j$ for $j\notin \mathcal{J}(i)$ does not appear in either (\ref{bic}) or (\ref{bir}) because the value for such a  good (if any) is zero to bidder $i$ under the joint distribution $\mathbf{\pi^*}$. \\
\indent For $z_i'\ge z_i$,  (\ref{bic}) implies that  \[
\sum_{j\in \mathcal{J}(i)}(v_i^j(z_i')-v_i^j(z_i))Q_i^j(z_i')\ge U_i(z_i')-U_i(z_i)\ge \sum_{j\in \mathcal{J}(i)}(v_i^j(z_i')-v_i^j(z_i))Q_i^j(z_i).\tag{2}\label{5.3.1}\]
Then  $U_i(z_i)$ is Lipschitz, thus absolutely continuous w.r.t. $z_i$, and so equal to the integral of its
derivative. In addition, note that $v_i^j(z_i)$ is differentiable for all $z_i$ but $z_i=1-\frac{1}{e}$. Then applying the envelope theorem to \eqref{5.3.1} at each point of differentiability, I obtain that 
\[\frac{\partial U_i(z_i)}{\partial z_i}=\sum_{j\in\mathcal{J}(i)}\frac{\partial v_i^j(z_i)}{\partial z_i}Q_i^j(z_i)=\left\{
\begin{array}{lll}
\sum_{j\in\mathcal{J}(i)}\frac{\bar{v_i}^j}{e(1-z_i)^2}Q_i^j(z_i)    &      & {\text{if $0\le z_i<1-\frac{1}{e}$};}\\
0    &      & {\text{if $z_i> 1-\frac{1}{e}$}.}
\end{array} \right.\]
Thus, \[U_i(z_i)=\left\{
\begin{array}{lll}
U_i(0)+\int_0^{z_i}[ \sum_{j\in\mathcal{J}(i)}\frac{\bar{v_i}^j}{e(1-\tilde{z_i})^2}Q_i^j(\tilde{z_i})] d\tilde{z_i}   &      & {\text{if $0\le z_i<1-\frac{1}{e}$};}\\
U_i(0)+\int_0^{1-\frac{1}{e}}[ \sum_{j\in\mathcal{J}(i)}\frac{\bar{v_i}^j}{e(1-\tilde{z_i})^2}Q_i^j(\tilde{z_i})] d\tilde{z_i}  &      & {\text{if $z_i\ge 1-\frac{1}{e}$}.}
\end{array} \right.\]
Therefore,   the expected revenue from bidder $i$ 
\begin{equation*} 
\begin{split}
\int_0^1T_i(z_i)dz_i & = \int_0^1[\sum_{j\in\mathcal{J}(i)}v_i^j(z_i)Q_i^j(z_i)-U_i(z_i)]dz_i \\
 & = \int_0^{1-\frac{1}{e}}\{\sum_{j\in\mathcal{J}(i)}v_i^j(z_i)Q_i^j(z_i)-U_i(0)-\int_0^{z_i}[ \sum_{j\in\mathcal{J}(i)}\frac{\bar{v_i}^j}{e(1-\tilde{z_i})^2}Q_i^j(\tilde{z_i})] d\tilde{z_i}\}dz_i+\\
 &\int_{1-\frac{1}{e}}^1\{\sum_{j\in\mathcal{J}(i)}v_i^j(z_i)Q_i^j(z_i)-U_i(0)-\int_0^{1-\frac{1}{e}}[ \sum_{j\in\mathcal{J}(i)}\frac{\bar{v_i}^j}{e(1-\tilde{z_i})^2}Q_i^j(\tilde{z_i})] d\tilde{z_i}\}dz_i\\
 &\le \int_0^{1-\frac{1}{e}}\{\sum_{j\in\mathcal{J}(i)}v_i^j(z_i)Q_i^j(z_i)-\int_0^{z_i}[ \sum_{j\in\mathcal{J}(i)}\frac{\bar{v_1}^j}{e(1-\tilde{z_i})^2}Q_i^j(\tilde{z_i})] d\tilde{z_i}\}dz_i+\\
 &\int_{1-\frac{1}{e}}^1\{\sum_{j\in\mathcal{J}(i)}v_i^j(z_i)Q_i^j(z_i)-\int_0^{1-\frac{1}{e}}[ \sum_{j\in\mathcal{J}(i)}\frac{\bar{v_i}^j}{e(1-\tilde{z_i})^2}Q_i^j(\tilde{z_i})] d\tilde{z_i}\}dz_i\\
 &=\sum_{j\in \mathcal{J}(i)}\{\int_0^{1-\frac{1}{e}}[(v_i^j(z_i)-(1-\frac{1}{e}-z_i)\frac{\bar{v_i}^j}{e(1-z_i)^2})Q_i^j(z_i)]dz_i+\\
 &\int_{1-\frac{1}{e}}^1[v_i^j(z_i)Q_i^j(z_i)-\int_0^{1-\frac{1}{e}}[\frac{\bar{v_i}^j}{e(1-\tilde{z_i})^2}Q_i^j(\tilde{z_i})] d\tilde{z_i}]dz_i\}\\
 &=\sum_{j\in \mathcal{J}(i)}\{\int_0^{1-\frac{1}{e}}[(v_i^j(z_i)-(1-z_i)\frac{\bar{v_i}^j}{e(1-z_i)^2})Q_i^j(z_i)]dz_i+\int_{1-\frac{1}{e}}^1[(v_i^j(z_i)Q_i^j(z_i)]dz_i\}\\
 &=\sum_{j\in \mathcal{J}(i)}\int_{1-\frac{1}{e}}^1[\bar{v_i}^jQ_i^j(z_i)]dz_i\le \sum_{j\in \mathcal{J}(i)}\frac{\bar{v_i}^j}{e},
\end{split}
\end{equation*}
where the first inequality holds because (\ref{bir}) implies that $U_i(0)\ge 0$, the third equality is obtained via integration by parts, the last equality holds because $v_i^j(z_i)-(1-z_i)\frac{\bar{v_i}^j}{e(1-z_i)^2}=0$ for $0\le z_i< 1-\frac{1}{e}$ and $v_i^j(z_i)=\bar{v_i}^j$ for $z_i>1-\frac{1}{e}$, and the last inequality holds because $Q_i^j(z_i)\le 1$.\\
\indent Then, the expected revenue from all the bidders
\begin{equation*}
\begin{split}
 \sum_{i=1}^I\int_0^1T_i(z_i)dz_i & \le \sum_{i\in\mathcal{I}}\sum_{j\in \mathcal{J}(i)} \frac{\bar{v_i}^j}{e}\\
 &=\sum_{j\in\mathcal{J}}\max_{i\in\mathcal{I}}\frac{\bar{v_i}^j}{e},   
\end{split}
\end{equation*} 
where the  equality holds by  the definition of $\mathcal{J}(i)$.\\
\indent Now, the expected regret 
\begin{equation*}
    \begin{split}
       \sum_{i\in\mathcal{I}}\int_0^1\sum_{j\in\mathcal{J}(i)}v_i^j(z_i)dz_i-\sum_{i=1}^I\int_0^1T_i(z_i)dz_i &=\sum_{j\in\mathcal{J}}\max_{i\in\mathcal{I}}\frac{2\bar{v_i}^j}{e}-\sum_{i=1}^I\int_0^1T_i(z_i)dz_i\\&\ge \sum_{j\in\mathcal{J}}\max_{i\in\mathcal{I}}\frac{\bar{v_i}^j}{e},
    \end{split}
\end{equation*}
where the term $\sum_{i\in\mathcal{I}}\int_0^1\sum_{j\in\mathcal{J}(i)}v_i^j(z_i)dz_i$ is the full surplus given the joint distribution $\mathbf{\pi^*}$ and  the equality holds by direct calculation and by the definition of $\mathcal{J}(i)$. Thus, $\sum_{j\in\mathcal{J}}\max_{i\in\mathcal{I}}\frac{\bar{v_i}^j}{e}$ is a lower bound of the expected regret under $\mathbf{\pi^*}$. 
\end{proof}
\begin{remark}\label{r1}
\normalfont In the Step 1 of the definition of $\mathcal{\pi^*}$, it is  important that for each good, only one bidder is selected  when there are ties. Otherwise, there would be competition for some good, resulting in a lower  expected regret.
\end{remark}
\begin{remark}
\normalfont One may be tempted to consider the following joint distribution over value profiles as a candidate for a worst-case  distribution.   There is only one bidder, bidder $i$,    whose values for the goods are non-zero; in addition, the bidder $i$'s values for the goods follow the  comonotonic equal-revenue distribution. The bidder $i$ is selected such that $i\in \argmax_{i\in \mathcal{I}}\sum_{j\in\mathcal{J}}\frac{\bar{v_i}^j}{e}$.  Then by an argument similar to the proof of Proposition \ref{p2} , a lower bound of the expected regret under this joint distribution is  $\max_{i\in\mathcal{I}}\sum_{j\in\mathcal{J}}\frac{\bar{v_i}^j}{e}$. However, this lower bound is lower than   $ \sum_{j\in\mathcal{J}}\max_{i\in\mathcal{I}}\frac{\bar{v_i}^j}{e}$ in general. Thus, this joint distribution is not ``bad'' enough for the designer and  is not a worst-case  distribution in general. Intuitively, this joint distribution may ignore a bidder whose upper bound of the values of a given good  is the highest among the bidders, resulting in an expected regret not high enough. This motivates the Step 1 of  the definition of $\mathbf{\pi^*}$. 
\end{remark}
\begin{remark}\label{r5}
\normalfont What if the designer knows nothing about the joint distribution over bidders' value profiles? That is, bidders' values can be unbounded. I argue that the regret cap for any mechanism that secures bidders' participation will be infinity. To see this, consider a joint distribution that puts all probability masses on  a single value profile in which  bidder $i$ has a large positive value of $\theta$ for good $j$, bidder $i$'s values for the other goods are zeros and the other bidders' values for all the goods are zeros. Recall that given  a joint distribution over value profiles, the revelation principle applies and it is without loss to restrict attention to direct mechanisms. In addition, the expected revenue is generated from selling good $j$ to bidder $i$ only, as the other values are zeros and the mechanism secures bidders' participation. Consider a revenue-maximizing (and therefore regret-minimizing) direct mechanism,  let $Q_i^j(x)$ denote the expected allocation probability of good $j$ to bidder $i$ given a bidder $i$'s report of $x$ about his  value for good $j$. Note that $Q_i^j(x)$ is non-decreasing in $x$ by the incentive compatible constraint.  Then the expected revenue is $T_i^j(\theta)=\theta Q_i^j(\theta)-\int_{0}^{\theta}Q(x)dx$. Define  $\lim_{v_i^j\to \infty}Q_i^j(v_i^j):=\kappa$. By definition,  for any $\epsilon >0$, there exists a $t\ge 0$ such that $Q_i^j(v_i^j)\ge \kappa-\epsilon$ for any $v_i^j\ge t$. Then $\int_{0}^{\theta}Q_i^j(x)dx=\int_{0}^{t}Q_i^j(x)dx+\int_t^{\theta}Q_i^j(x)dx\ge (\theta-t)(\kappa-\epsilon)$, so $T_i^j(\theta)\le  \theta \kappa-(\theta-t)(\kappa-\epsilon)=\epsilon \theta+t(\kappa-\epsilon)$, and the expected regret is $\theta-T_i^j(\theta)\ge (1-\epsilon)\theta-t(\kappa-\epsilon)$. As $\epsilon$ can be chosen to be arbitrarily small, the expected regret goes to infinity as $\theta$ goes to infinity\footnote{This proof is similar to the proof of Proposition 1 in \cite{carrasco2017optimal}.}.  
\end{remark}
\indent Theorem \ref{t3} follows immediately from Proposition \ref{p1} and \ref{p2}.
\section{Special Cases}\label{s6}
\indent In this section, I present the results for  two special cases in which $I=1$ and $J=1$ respectively, which correspond to multi-dimensional screening (Section \ref{s61}) and single-good auction (Section \ref{s62}). 
\subsection{Multi-Dimensional Screening: $I=1$}\label{s61}
Let the mechanism $(M_1^*, \mathbf{q_1^*},t_1^*)$ (resp, the joint distribution $\mathbf{\pi_1^*}$ ) be the specialization of the mechanism   $(M^*, \mathbf{q^{*},t^{*}})$ (resp, the joint distribution $\mathbf{\pi^*}$) to the case in which $I=1$. I omit their descriptions for brevity. Note that the mechanism $(M_1^*, \mathbf{q_1^*},t_1^*)$ is a \textit{separate randomized  posted-price mechanism}: each good is sold separately with a random posted price. In the joint distribution $\mathbf{\pi_1^*}$, the marginal distribution of each good is an equal-revenue distribution and the values across goods are comonotonic.  
\begin{corollary}[Multi-Dimensional Screening]\label{t1}
If $I=1$, then the  mechanism $(M_1^*, \mathbf{q_1^*},t_1^*)$ is a minimax regret mechanism with the regret cap of $\sum_{j\in\mathcal{J}}\frac{\bar{v_1}^j}{e}$.  The joint distribution $\mathbf{\pi_1^*}$ is a worst-case  distribution.

\end{corollary}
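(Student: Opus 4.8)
The plan is to obtain Corollary \ref{t1} as the $I=1$ specialization of Theorem \ref{t3}, checking only that each ingredient of the general argument degenerates correctly. First I would note that with a single bidder $\mathcal{I}=\{1\}$, so $\max_{i\in\mathcal{I}}\bar{v_i}^j=\bar{v_1}^j$ for every good $j$, and hence the claimed regret cap $\sum_{j\in\mathcal{J}}\frac{\bar{v_1}^j}{e}$ is exactly the specialization of the regret cap $\sum_{j\in\mathcal{J}}\max_{i\in\mathcal{I}}\frac{\bar{v_i}^j}{e}$ from Theorem \ref{t3}.

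For the upper bound, I would invoke Proposition \ref{p1}. In the one-bidder case there is never a competing bid, so the only active branches of $t_1^{j*}$ are the first one (with $v_{(2)}^j$ read as $0$) and the low-value branch, and there are no ties to handle. The truth-telling part of the proof of Proposition \ref{p1} then shows the ex-post regret from good $j$ equals $\frac{\bar{v_1}^j}{e}$ when $v_1^j\ge\frac{\bar{v_1}^j}{e}$ and is strictly less otherwise; the equilibrium-robustness part (the posted price lies below $\frac{\bar{v_1}^j}{e}$, so the buyer can lose at most $\frac{\bar{v_1}^j}{e}$) carries over verbatim. Summing over goods by separability gives $GER(M_1^*,\mathbf{q_1^*},t_1^*)\le\sum_{j\in\mathcal{J}}\frac{\bar{v_1}^j}{e}$.

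For the lower bound, I would specialize the construction of $\mathbf{\pi^*}$: Step 1 necessarily selects bidder $1$ for every good, so $\mathcal{J}(1)=\mathcal{J}$, and Steps 4--5 (Zeros and Independence) are vacuous; what remains is that $(v_1^j)_{j\in\mathcal{J}}$ are comonotonic equal-revenue, parameterized by a single scalar $z\sim U[0,1]$ through the inverse quantile maps $v_1^j(z)$. The chain of inequalities in the proof of Proposition \ref{p2} then runs with the index set collapsed to the single bidder: \eqref{bic}--\eqref{bir} reduce to one-dimensional incentive and participation constraints in $z$, the envelope formula for $U_1(z)$ is unchanged, integration by parts yields $\int_0^1 T_1(z)\,dz\le\sum_{j\in\mathcal{J}}\frac{\bar{v_1}^j}{e}$, and subtracting this from the full surplus $\sum_{j\in\mathcal{J}}\frac{2\bar{v_1}^j}{e}$ gives the lower bound $\sum_{j\in\mathcal{J}}\frac{\bar{v_1}^j}{e}$ on the expected regret under $\mathbf{\pi_1^*}$.

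Matching the upper and lower bounds then yields the corollary exactly as Theorem \ref{t3} follows from Propositions \ref{p1} and \ref{p2}. I do not expect a genuine obstacle here; the one point worth making explicit is that the multi-dimensional screening problem under $\mathbf{\pi_1^*}$ truly collapses to single-dimensional screening because comonotonicity makes the buyer's effective type the scalar $z$ --- this is the \emph{screening effect} flagged in the introduction, and it is what makes a \emph{separate randomized posted-price mechanism} optimal rather than some genuinely multi-dimensional mechanism.
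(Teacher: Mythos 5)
Your proposal is correct and takes essentially the same route as the paper, whose entire proof is the one-line remark that the argument for Theorem \ref{t3} specializes to $I=1$; your explicit specialization of Propositions \ref{p1} and \ref{p2} is exactly what that adaptation amounts to. (One parenthetical gloss is slightly off: the random reserve is supported on $[\bar{v_1}^j/e,\bar{v_1}^j]$ and so lies \emph{above} $\bar{v_1}^j/e$, not below it --- the regret bound for values $v_1^j<\bar{v_1}^j/e$ instead follows because the full surplus itself is then below $\bar{v_1}^j/e$, as Proposition \ref{p1} already argues.)
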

\begin{proof}
The proof is a straightforward adaptation of the proof of Theorem \ref{t3} to the case in which $I=1$. 
\end{proof}
\begin{remark}
\normalfont There are very limited results in multi-dimensional screening for other correlation structures. \cite{mcafee1989multiproduct} show that with independent continuous distributions, separate selling is essentially never optimal. Therefore an independent joint distribution,  where the marginal distributions remain the same but the values across the goods are independent, is not a worst-case  distribution. 
\end{remark}
\begin{remark}\label{r7}
\normalfont There is another minimax regret mechanism for the multi-dimensional screening: a randomized grand bundling. It can be described as follows. The designer  sells the bundle of all the goods only.   Let $b$ be the bid for the bundle of all the goods. If $b>\sum_{j\in\mathcal{J}}\frac{\bar{v_1}^j}{e}$, then allocate the bundle with a probability of $1+\ln{\frac{b}{\sum_{j\in\mathcal{J}}\bar{v_1}^j}}$ and charge a  price of $b-\sum_{j\in\mathcal{J}}\frac{\bar{v_1}^j}{e}$; otherwise, no goods are allocated and the buyer (the bidder 1) pays nothing.  It is straightforward to show that the regret cap of this mechanism is $\sum_{j\in\mathcal{J}}\frac{\bar{v_1}^j}{e}$. 
\end{remark}
\subsection{Single-Good Auction: $J=1$}\label{s62}
Let the mechanism $(M^{1*},\mathbf{q^{1*},t^{1*}})$ (resp, the joint distribution $\mathbf{\pi^{1*}}$ ) be the specialization of the mechanism   $(M^*, \mathbf{q^{*},t^{*}})$ (resp, the joint distribution $\mathbf{\pi^*}$) to the case in which $J=1$. I omit their descriptions for brevity. Note that the mechanism $(M^{1*},\mathbf{q^{1*},t^{1*}})$ is a \textit{second-price auction with random reserves}: the single good   is auctioned  via a second-price auction with bidder-specific random reserves. In the joint distribution $\mathbf{\pi^{1*}}$, only the bidder with the highest upper bound of the values for the good has a positive value (breaking ties arbitrarily) and  the marginal distribution of this bidder's value is an equal-revenue distribution. 
\begin{corollary}[Single-Good Auction]\label{t2}
If $J=1$, then the mechanism $(M^{1*},\mathbf{q^{1*},t^{1*}})$ is a minimax regret mechanism with the regret cap of  $\max_{i\in\mathcal{I}}\frac{\bar{v_i}^1}{e}$.  The joint distribution $\mathbf{\pi^{1*}}$ is a worst-case  distribution.

\end{corollary}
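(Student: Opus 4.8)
The plan is to obtain Corollary \ref{t2} as the $J=1$ specialization of Theorem \ref{t3}, following the same two-part template used in Section \ref{s54}: first verify that $\max_{i\in\mathcal{I}}\frac{\bar{v_i}^1}{e}$ is an upper bound on the regret cap of $(M^{1*},\mathbf{q^{1*},t^{1*}})$, and then verify that the same number is a lower bound on $\inf_{\mathcal{M}}\inf_{\sigma}ER(\mathcal{M},\mathbf{\pi^{1*}},\sigma)$. Together these imply that the mechanism attains the minimax regret and that $\mathbf{\pi^{1*}}$ is a worst-case distribution, which is exactly the claim. Since the statement is literally Theorem \ref{t3} with $|\mathcal{J}|=1$, Propositions \ref{p1} and \ref{p2} apply verbatim; I would nonetheless spell out the specialization to make it transparent.

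For the upper bound, I would run the argument of Proposition \ref{p1} for the single good $j=1$. The ex-post case analysis for the unique-highest-bidder configurations (the three subcases according to where $v_{(2)}^1$ and $v_i^1$ sit relative to $\frac{\bar{v_i}^1}{e}$) and for ties under the stated tie-breaking rule goes through unchanged, as does the ``arbitrary equilibrium'' step that reduces any BNE to either a profitable truthful report (whence the truth-telling bound applies) or a loss that is itself bounded by $\frac{\bar{v_i}^1}{e}$. Because there is only one good, the final summation $\sum_{j\in\mathcal{J}}\max_{i\in\mathcal{I}}\frac{\bar{v_i}^j}{e}$ collapses to the single term $\max_{i\in\mathcal{I}}\frac{\bar{v_i}^1}{e}$, giving the claimed upper bound.

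For the lower bound, I would specialize the Myerson-type computation in the proof of Proposition \ref{p2}. Under $\mathbf{\pi^{1*}}$ exactly one bidder $i^*\in\argmax_{i\in\mathcal{I}}\bar{v_i}^1$ has a positive value and every other bidder has value $0$; Steps 3 and 5 (comonotonicity and independence) are vacuous when $J=1$. Parameterizing $i^*$'s value by $z\sim U[0,1]$ through the inverse quantile function $v_{i^*}^1(z)$, participation security \eqref{ps} forces the other bidders' payments to be zero, so all revenue comes from $i^*$; then \eqref{bic}--\eqref{bir} yield $\int_0^1 T_{i^*}(z)\,dz\le\frac{\bar{v_i^*}^1}{e}$ exactly as in Proposition \ref{p2}, while the full surplus is $\int_0^1 v_{i^*}^1(z)\,dz=\frac{2\bar{v_i^*}^1}{e}$ by direct integration. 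Hence the expected regret under $\mathbf{\pi^{1*}}$ is at least $\frac{2\bar{v_i^*}^1}{e}-\frac{\bar{v_i^*}^1}{e}=\max_{i\in\mathcal{I}}\frac{\bar{v_i}^1}{e}$, as required.

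There is no genuine obstacle here beyond bookkeeping, but the one point I would state explicitly is the role of the \emph{competition effect} in the lower-bound direction: concentrating all positive value on a single bidder collapses the problem to a one-bidder screening problem, and the equal-revenue marginal is precisely the distribution making Myerson's bound tight; equivalently one may cite Proposition 1 of \cite{bergemann2008pricing} at this step. Accordingly I would keep the proof to the brief remark that the claim is the $|\mathcal{J}|=1$ instance of Theorem \ref{t3}, optionally followed by the two specialization paragraphs above.
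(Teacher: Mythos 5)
Your proposal is correct and matches the paper's approach: the paper proves this corollary simply by noting it is the straightforward specialization of Theorem \ref{t3} (i.e., of Propositions \ref{p1} and \ref{p2}) to $J=1$, which is exactly what you do, just with the bookkeeping spelled out. The explicit computations (the collapse of the sum to a single term, the revenue bound $\int_0^1 T_{i^*}(z)\,dz\le \max_{i}\bar{v_i}^1/e$, and the full surplus $2\max_i\bar{v_i}^1/e$) are all accurate.
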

\begin{proof}
The proof is a straightforward adaptation of the proof of Theorem \ref{t3} to the case in which $J=1$. 
\end{proof}
\section{Discussion}\label{s7}
\subsection{Solution Concept}
In this paper, I consider  the class of all mechanisms that secure bidders' participation and the worst Bayes Nash Equilibrium for the designer. The solution concept follows from a recent literature on informationally robust mechanism design, e.g., \cite{du2018robust} and \cite{brooks2021optimal}. Several remarks can be made in sequence.  First, if we assume that the class of mechanisms is the set of dominant-strategy  mechanisms  and that the truth-telling equilibrium is played, then the same result will hold by a simple extension of the current proofs. This is because under the constructed worst-case distribution, the expected regret under the best dominant-strategy mechanism is the same as that under the best Bayesian incentive-compatible mechanism.  Second,   for the main result, it is not crucial that  adversarial nature has to pick the worst equilibrium. That is, we can allow adversarial nature to pick the best equilibrium for the designer, and the same result will still hold. So the main result may be a priori surprising result: the class of the mechanisms is much wider than the set of dominant-strategy mechanisms,  yet, a dominant-strategy mechanism emerges as  a minimax regret mechanism.
\subsection{Comparative Statics}
It is instructive to discuss some comparative statics assuming that there is no trivial good or bidder, i.e., $\bar{v_i}^j>0$ for any $i\in \mathcal{I}$ and $j\in \mathcal{J}$. First, the minimax regret is strictly increasing in  $J$. To understand this, note that  the comonotonic structure in the worst-case  distribution  reduces multi-dimensional screening to single-dimensional screening, then  when adding a new good, the minimax regret will  increase by the amount of  the minimax regret when there is only this new good. Second, the minimax regret is weakly increasing in $I$. To understand this, note that the  zero values in the worst-case  distribution eliminate the competition\footnote{The competition would increase with $I$ for general joint distributions. For example,  consider a joint distribution in which  bidders' values of a given good follow  $i.i.d.$ uniform distributions.  It is straightforward to show that  the expected regret under an optimal mechanism would eventually go to 0 as $I$ goes to infinity given this joint distribution.} for a given good, then as the full surplus weakly increases with $I$,  the minimax regret also weakly increases with $I$ (strictly increases with $I$ when the new bidder's upper bound of the values of some good is higher than that of any previous bidder). Third, for the symmetric case where the upper bounds of the values for a given good are the same across bidders, or $\bar{v_i}^j=\bar{v_k}^j$ for any $i\in \mathcal{I}$, any $k\in \mathcal{I}$ and any $j\in \mathcal{J}$, the \textit{average} minimax regret (the minimax regret divided by $I$) is strictly decreasing in $I$. To understand this, note that when adding a symmetric bidder, the full surplus does not change given the worst-case  distribution, and, again, there is still no competition for any good. Then, the minimax regret remains the same and thus the average minimax regret is strictly decreasing in $I$.

\subsection{Digital Goods}
Consider a related problem in which the designer auctions  digital goods\footnote{A digital goods auction is an auction in which the designer has an unlimited supply of the same good.} to $I$ bidders, e.g., e-books, mobile apps, online courses, etc. Each bidder  demands at most one unit of the good.  Bidder $i$ has a private value $v_i\in [0,\bar{v_i}]$. The designer aims to minimize the worst-case expect regret. The formal objective function can be similarly defined.  Indeed,  this problem may be interpreted as a special case of the model: there are $I$ different goods, but each bidder values only one of the goods and the good  each bidder values is different. Under this interpretation, adversarial nature's ability is ``constrained'' in that the set of possible joint distributions is smaller than the previous one.  Note however that the worst-case distribution in Theorem \ref{t3} is not excluded. Then a direct implication of Theorem \ref{t3} is that   a \textit{separate randomized posted-price mechanism} as follows is a minimax regret mechanism for this problem:  \[q_i(v_1,v_2,\cdots,v_I)=\left\{
\begin{array}{lll}
1+\ln{\frac{v_i}{\bar{v_i}}}    &      & {\text{if $\frac{\bar{v_i}}{e}\le v_i\le \bar{v_i}$};}\\
0    &      & {\text{if $0\le v_i<\frac{\bar{v_i}}{e}$}.}
\end{array} \right.\]
And the payment rule is characterized by the envelope theorem. Note that the allocation to bidder $i$ depends on bidder $i$'s value only.  In addition, an \textit{independent equal-revenue distribution} as follows is a worst-case  distribution: the marginal distribution of $v_i$ follows an equal-revenue distribution whose cumulative distribution function is \[\pi_i(v_i)=\left\{
\begin{array}{lll}
1-\frac{\bar{v_i}}{ev_i}    &      & {\text{if $\frac{\bar{v_i}}{e}\le v_i< \bar{v_i}$};}\\
1    &      & {\text{if $v_i=\bar{v_i}$}.}
\end{array} \right.\]
And the values across bidders are independent.
\section{Concluding Remarks}\label{s8}
In this paper, I characterize a simple minimax regret mechanism for auctioning multiple goods given general upper bounds of values. It is worth noting that the proposed  mechanism is strategy-proof. Hence,  it (essentially\footnote{A strategy-proof  mechanism can be slightly perturbed so that truth-telling is the unique equilibrium.}) remains a  minimax regret mechanism even without the assumption of a common prior  among bidders.  Critically,   I drop the extreme assumption made by the traditional mechanism design literature that the designer knows the joint distribution over value profiles, but impose an equally extreme assumption that the designer has no distributional information except for the upper bounds of values, on which the result heavily relies. I believe the truth lies in  intermediate cases, which are interesting to further explore.  I further conjecture that separation remains a property in many other informational environments. 
\appendix
\section{Additional Information}\label{s9}
\indent An \textit{additional information structure} consists of a measuable set of additional information $S_i$ for each bidder $i$, with $S=\times_{i=1}^IS_i$, and a joint distribution $\delta\in \Delta(V\times S)$. An additional information structure is denoted by $\mathcal{T}=(S,\delta)$. I say  $\mathcal{T}$ is $\pi-consistent$ if the marginal of $\delta$ on $V$ is $\pi$, i.e., for every measurable $\tilde{V}\subseteq V$, $\delta(\tilde{V}\times S)=\pi(\tilde{V})$. The set of all $\pi-consistent$ additional information structures is denoted by $\mathbf{T}(\pi)$. As before, each bidder $i$ knows his private value vector $\mathbf{v_i}\in V_i$. And $\pi$ is their common prior.  But,  before playing a mechanism, each bidder $i$ may observe a signal $\mathbf{s_i}\in S_i$ from an additional information structure $\mathcal{T}\in \mathbf{T}(\pi)$. And $\mathcal{T}$ is their common knowledge. The definition of and the requirement for a mechanism are the same as before. Given a mechanism $\mathcal{M}$ and a common prior $\mathbf{\pi}$ and an additional information structure $\mathcal{T}\in \mathbf{T}(\pi)$, I have a game of incomplete information.  With slight abuse of notations,  a \textit{Bayes Nash Equilibrium} (BNE) of the game is a strategy profile $\sigma=(\sigma_i)$, $\sigma_i:V_i\times S_i\to \Delta(M_i)$, such that $\sigma_i$ is best response to $\sigma_{-i}$: Let $U_i(\mathbf{v_i,s_i},\mathcal{M},\mathbf{\pi},\mathcal{T}, \sigma)=\int_{\mathbf{v_{-i},s_{-i}}}U_i(\mathbf{v_i}, (\sigma_i(\mathbf{v_i,s_i}),\sigma_{-i}(\mathbf{v_{-i},s_{-i}})))d\mathbf{\delta(v_{-i},s_{-i}|v_i,s_i)}$ where $U_i(\mathbf{v_i}, (\sigma_i(\mathbf{v_i,s_i}),\sigma_{-i}(\mathbf{v_{-i},s_{-i}})))$ is the multilinear extension of $U_i$ in Equation (\ref{1}), then for any $i,\mathbf{v_i,s_i},\sigma_i'$,
    \[U_i(\mathbf{v_i,s_i},\mathcal{M},\mathbf{\pi},\mathcal{T}, \sigma)\ge U_i(\mathbf{v_i,s_i},\mathcal{M},\mathbf{\pi},\mathcal{T}, (\sigma_i',\sigma_{-i})).\tag{BR'}\label{br'} \]
The set of  all Bayes Nash Equilibria for a given mechanism $\mathcal{M}$ and a given common prior $\mathbf{\pi}$ and a given additional information structure $\mathcal{T}\in\mathbf{T(\pi)}$ is denoted by $\Sigma(\mathcal{M},\mathbf{\pi},\mathcal{T})$.\\
\indent Given a common prior $\mathbf{\pi}$ and an additional information structure $\mathcal{T}\in \mathbf{T(\pi)}$, the expected regret  is $ER'(\mathcal{M},\mathbf{\pi},\mathcal{T},  \sigma)=\int_{\mathbf{v,s}}\{ \sum_{j=1}^J\max_{i\in\mathcal{I}}v_i^j-\sum_{i=1}^I t_i(\sigma(\mathbf{v,s}))\}d\mathbf{\delta(v,s)}$. The designer evaluates a mechanism by its worst-case expected regret across all possible common priors and consistent additional information structures and equilibria. Formally, the designer evaluates a mechanism $\mathcal{M}$ by $GER'(\mathcal{M})=\sup_{\mathbf{\pi}\in \Delta(V)}\sup_{\mathcal{T}\in \mathbf{T(\pi)}}\sup_{\Sigma(\mathcal{M},\mathbf{\pi},\mathcal{T})}ER'(\mathcal{M},\mathbf{\pi}, \mathcal{T}, \sigma)$. The designer's goal is to find a mechanism with the minimal worst-case expected regret. Formally, the designer aims to find a mechanism, referred to as a \textit{min-3max regret mechanism}, that solves the following problem:\[
\inf_\mathcal{M}GER'(\mathcal{M}).\tag{MRM'}\label{mrm'}\]
\begin{manualtheorem}{1'}\label{t1'}
The mechanism  $(M^*, \mathbf{q^{*},t^{*}})$ is a min-3max regret mechanism. 
\end{manualtheorem}
\begin{proof}
For adversarial nature's strategy, let the common prior be $\mathbf{\pi^*}$ and the set of additional information $S$ be a singleton. The proof of Theorem \ref{t3} then applies.
\end{proof}
Intuitively, adversarial nature cannot generate strictly more expected regret even though it can use additional information structures because the mechanism $(M^*,\mathbf{q^*,t^*})$ is strategy-proof.
\bibliographystyle{apalike}
\bibliography{abc}

\end{document}